\def\calA{\mathcal{A}}
\def\calH{\mathcal{H}}
\def\calL{\mathcal{L}}
\newtheorem{lemma}{Lemma}
\newtheorem{theorem}{Theorem}
\newenvironment{proof}{\noindent {\textbf{Proof:}}\rm}{\hfill $\Box$ \rm\bigskip}
\title{Constructing Many Faces in Arrangements of Lines and Segments\thanks{This research was supported in part by NSF under Grant CCF-2005323.}}
\author{
Haitao Wang
%\thanks{Department of Computer Science, Utah State University, Logan, UT 84322, USA. {\tt haitao.wang@usu.edu}}
}
\affil{Department of Computer Science \\
Utah State University, Logan, UT 84322, USA
\\ {\tt haitao.wang@usu.edu}}
\begin{document}

\pagestyle{plain}
\pagenumbering{arabic}
\setcounter{page}{1}
\date{}

\thispagestyle{empty}
\maketitle

\vspace{-0.35in}
\begin{abstract}
We present new algorithms for computing many faces in arrangements of lines and segments. Given a set $S$ of $n$ lines (resp., segments) and a set $P$ of $m$ points in the plane, the problem is to compute the faces of the arrangements of $S$ that contain at least one point of $P$.

For the line case, we give a deterministic algorithm of $O(m^{2/3}n^{2/3}\log^{2/3} (n/\sqrt{m})+(m+n)\log n)$ time. This improves the previously best deterministic algorithm [Agarwal, 1990] by a factor of $\log^{2.22}n$ and improves the previously best randomized algorithm [Agarwal, Matou\v{s}ek, and Schwarzkopf, 1998] by a factor of $\log^{1/3}n$ in certain cases (e.g., when $m=\Theta(n)$).
%In particular, when $m=O(\sqrt{n\log n})$, the runtime of our algorithm is $O(n\log n)$.

For the segment case, we present a deterministic algorithm of $O(n^{2/3}m^{2/3}\log n+\tau(n\alpha^2(n)+n\log m+m)\log n)$ time, where $\tau=\min\{\log m,\log (n/\sqrt{m})\}$ and $\alpha(n)$ is the inverse Ackermann function. This improves the  previously best deterministic algorithm [Agarwal, 1990] by a factor of $\log^{2.11}n$ and improves the previously best randomized algorithm [Agarwal, Matou\v{s}ek, and Schwarzkopf, 1998] by a factor of $\log n$ in certain cases (e.g., when $m=\Theta(n)$).
We also give a randomized algorithm of $O(m^{2/3}K^{1/3}\log n+\tau(n\alpha(n)+n\log m+m)\log n\log K)$ expected time, where $K$ is the number of intersections of all segments of $S$.

In addition, we consider the query version of the problem, that is, preprocess $S$ to compute the face of the arrangement of $S$ that contains any query point. We present new results that improve the previous work for both the line and the segment cases.
%For the line case, we build a data structure of $O(n\log n)$ space in $O(n\log n)$ randomized time, so that the face containing the query point can be computed in $O(\sqrt{n}\log n+k)$ time.
\end{abstract}

{\em Keywords:} arrangements, many faces, face queries, cuttings, duality

\section{Introduction}
\label{sec:intro}

We consider the problem of computing many faces in arrangements of lines and segments. Given a set $S$ of $n$ lines (resp., segments) and a set $P$ of $m$ points in the plane, the problem is to compute the faces of the arrangement of $S$ that contain at least one point of $P$. Note that faces in an arrangement of lines are convex, but they may not even be simply connected in an arrangement of segments.
These are classical problems in computational geometry and have been studied in the literature. There has been no progress on these problems for more than two decades. In this paper, we present new algorithms that improve the previous work.

\paragraph{The line case.}
For the line case where $S$ consists of $n$ lines, it has been proved that the combinatorial
complexity of all faces of the arrangement that contain at least one
point of $P$ is bounded by
$O(m^{2/3}n^{2/3}+n)$~\cite{ref:ClarksonCo90} (which matches the
$\Omega(m^{2/3}n^{2/3}+n)$ lower bound~\cite{ref:EdelsbrunnerOn86}),
as well as bounded by $O(n\sqrt{m})$ and
$O(n+m\sqrt{n})$~\cite{ref:EdelsbrunnerOn86}. To compute these faces,
a straightforward approach is to first construct the arrangement of
$S$ and then find the faces using point
locations~\cite{ref:EdelsbrunnerOp86,ref:KirkpatrickOp83}. This takes
$O(n^2+m\log n)$ time in total. Edelsbrunner, Guibas, and
Sharir~\cite{ref:EdelsbrunnerTh90} gave a randomized algorithm of
$O(m^{2/3-\delta}n^{2/3+2\delta}\log n+n\log n\log m)$ expected time
for any $\delta>0$. Later Agarwal~\cite{ref:AgarwalPa902} presented an
	improved deterministic algorithm of $O(m^{2/3}n^{2/3}\log^{5/3}n
	\log^{1.11}(m/\sqrt{n})+(m+n)\log n)$ time; Agarwal, Matou\v{s}ek,
	and Schwarzkopf~\cite{ref:AgarwalCo98} proposed a randomized
	algorithm of $O(m^{2/3}n^{2/3}\log (n/\sqrt{m})+(m+n)\log n)$
	expected time.
On the other hand, the problem has a lower bound of $\Omega(m^{2/3}n^{2/3}+n\log n + m)$ time due to the above $\Omega(m^{2/3}n^{2/3}+n)$ lower bound~\cite{ref:EdelsbrunnerOn86} on the combinatorial complexity of all these faces and also because computing a single face in line arrangements requires $\Omega(n\log n)$ time.

We propose a new deterministic algorithm of $O(m^{2/3}n^{2/3}\log^{2/3} (n/\sqrt{m})+(m+n)\log n)$ time. In  certain cases (e.g., when $m=\Theta(n)$), our result improves the deterministic algorithm of~\cite{ref:AgarwalPa902} by a factor of $\log^{2.22}n$ and improves the randomized algorithm of~\cite{ref:AgarwalCo98} by a factor of $\log^{1/3}n$.

Our algorithm follows the framework of Agarwal~\cite{ref:AgarwalPa902}, which uses a cutting of $S$ to divide the problem into a collection of subproblems. To solve each subproblem,
Agarwal~\cite{ref:AgarwalPa902} derived another algorithm of $O(n\log n+ m\sqrt{n}\log^2 n)$ time. Our main contribution is a more efficient algorithm of $O(n\log n+ m\sqrt{n\log n})$ time. Using our new algorithm to solve the subproblems induced by the cutting, the asserted result can be achieved. The algorithm of~\cite{ref:AgarwalCo98} also follows a similar framework, but it uses the random sampling technique~\cite{ref:ClarksonAp89} instead of the cutting to divide the problem, and a randomized algorithm of $O(n\log n+ m\sqrt{n}\log n)$ expected time was proposed in~\cite{ref:AgarwalCo98} to solve each subproblem.
In particular, our algorithm runs in $O(n\log n)$ time for $m= O(\sqrt{n\log n})$, which matches the $\Omega(n\log n)$ lower bound for computing a single face (for comparison, the randomized algorithm of~\cite{ref:AgarwalCo98} runs in $O(n\log n)$ expected time for $m=O(\sqrt{n})$).

\paragraph{The segment case.}
For the segment case where $S$ consists of $n$ line segments, it is known that the combinatorial complexity of all faces of the arrangement that contain at least one point of $P$ is upper bounded by $O(m^{2/3}n^{2/3}+n\alpha(n)+n\log m)$~\cite{ref:AronovTh92} and $O(\sqrt{m}n\alpha(n))$~\cite{ref:EdelsbrunnerAr92}, as well as lower bounded by $\Omega(m^{2/3}n^{2/3}+n\alpha(n))$~\cite{ref:EdelsbrunnerTh90}, where $\alpha(n)$ is the inverse Ackermann function. To compute these faces, as in the line case, a straightforward approach is to first construct the arrangement of $S$ and then find the faces using point locations~\cite{ref:EdelsbrunnerOp86,ref:KirkpatrickOp83}. This takes $O(n^2+m\log n)$ time in the worst case (more precisely, the arrangement can be constructed in $O(n\log n + K)$ time~\cite{ref:BalabanAn95,ref:ChazelleAn92Edelsbrunner} or by simpler randomized algorithms of the same expected time~\cite{ref:ClarksonAp89,ref:ChazelleCo93,ref:MulmuleyA90}; throughout the paper, we use $K$ to denote the number of intersections of all segments of $S$).

Edelsbrunner, Guibas, and Sharir~\cite{ref:EdelsbrunnerTh90} gave a randomized algorithm of $O(m^{2/3-\delta}n^{2/3+2\delta}\log n+n\alpha(n)\log^2 n\log m)$ expected time for any $\delta>0$. Agarwal~\cite{ref:AgarwalPa902} presented an improved deterministic algorithm of $O(m^{2/3}n^{2/3}\log n \log^{2.11}(n/\sqrt{m})+n\log^3 n+ m\log n)$ time. Agarwal, Matou\v{s}ek, and Schwarzkopf~\cite{ref:AgarwalCo98} derived a randomized algorithm of
$O(n^{2/3}m^{2/3}\log^2 (K/m)+(n\alpha(n)+n\log m+m)\log n)$ expected time and another algorithm of
$O(m^{2/3}K^{1/3}\log^2 (K/m)+(n\alpha(n)+n\log m+m)\log n)$ expected time\footnote{It appears that their time analysis~\cite{ref:AgarwalCo98} is based on the assumption that $K$ is known. If $K$ is not known, their algorithm could achieve $O(m^{2/3}K^{1/3}\log^2 (K/m)+(m+n\log m+n\alpha(n))\log n\log K)$ expected time by the standard trick of ``guessing'', which is also used in this paper.}.
%$O(m^{2/3}K^{1/3}\log^2 (n/\sqrt{m})+(m+n\log m+n\alpha(n))\log n)$ expected time.
On the other hand, the lower bound $\Omega(m^{2/3}n^{2/3}+n\log n + m)$ for the line case is also applicable here (and we are not aware of any better lower bound). Note that computing a single face in an arrangement of segments can be done in $O(n\alpha(n)\log n)$ expected time by a randomized algorithm~\cite{ref:ChazelleCo93} or in $O(n\alpha^2(n)\log n)$ time by a deterministic algorithm~\cite{ref:AmatoCo95} (which improve the previous $O(n\log^2 n)$ time algorithm~\cite{ref:MitchellOn90} and $O(n\alpha(n)\log^2 n)$ time algorithm~\cite{ref:EdelsbrunnerTh90};
but computing the upper envelope can be done faster in $O(n\log n)$ time~\cite{ref:HershbergerFi89}).
%It has been an open problem whether computing a single face can be done in $O(n\alpha(n)\log n)$ time by a deterministic algorithm, although this is possible for computing the upper envelope~\cite{ref:HershbergerFi89}.

We propose a new deterministic algorithm of $O(n^{2/3}m^{2/3}\log n+\tau(n\alpha^2(n)+n\log m+m)\log n)$ time, where $\tau=\min\{\log m,\log (n/\sqrt{m})\}$. In certain cases (e.g., when $m=\Theta(n)$ and $K=\Theta(n^2)$), our result improves the deterministic algorithm of~\cite{ref:AgarwalPa902} by a factor of $\log^{2.11}n$ and improves the randomized algorithm of~\cite{ref:AgarwalCo98} by a factor of $\log n$.
In particular, the algorithm runs in $O(n\alpha^2(n)\log n)$ time for $m= O(1)$, which matches the time for computing a single face~\cite{ref:AmatoCo95}, and runs in $O(m\log n)$ time for $m= \Theta(n^2)$, which matches the performance of the above straightforward approach. Our algorithm uses a different approach than the previous work~\cite{ref:AgarwalPa902,ref:AgarwalCo98}. In particular,
our above algorithm for the line case is utilized as a subroutine.

If $K=o(n^2)$, we further obtain a faster randomized algorithm of $O(m^{2/3}K^{1/3}\log n+\tau(n\alpha(n)+n\log m+m)\log n\log K)$ expected time, where $\tau=\min\{\log m,\log (n/\sqrt{m})\}$. This improves the result of~\cite{ref:AgarwalCo98} by a factor of $\log n$ for relative large values of $K$, e.g., when $m=\Theta(n)$ and $K=\Omega(n^{1+\epsilon})$ for any constant $\epsilon \in (0,1]$.
Our above deterministic algorithm (with one component replaced by a faster randomized counterpart) is utilized as a subroutine.

\paragraph{The face query problem.}
We also consider a related face query problem in which we wish to preprocess $S$ so that given a query point $p$, the face of the arrangement containing $p$ can be computed efficiently.

For the line case, inspired by our techniques for computing many faces and utilizing the randomized optimal partition tree of Chan~\cite{ref:ChanOp12}, we construct a data structure of $O(n\log n)$ space in $O(n\log n)$ randomized time, so that the face $F_p(S)$ of the arrangement of $S$ that contains a query point $p$ can be computed and the query time is bounded by $O(\sqrt{n}\log n)$ with high probability. More specifically, the query algorithm returns a binary search tree representing the face $F_p(S)$ so that standard binary-search-based queries on $F_p(S)$ can be handled in $O(\log n)$ time each, and $F_p(S)$ can be output explicitly in $O(|F_p(S)|)$ time. Previously, Edelsbrunner, Guibas, Hershberger, Seidel, Sharir, Snoeyink, and Welzl~\cite{ref:EdelsbrunnerIm89} built a data structure of $O(n\log n)$ space in $O(n^{3/2}\log^2n)$ randomized time, and the query time is bounded by $O(\sqrt{n}\log^5 n)$ with high probability, which is further reduced to $O(\sqrt{n}\log^2 n)$ in~\cite{ref:GuibasCo91} using compact interval trees.
Thus, our result improves their preprocessing time by a factor of $\sqrt{n}\log n$ and improves their query time by a factor of $\log n$. We further obtain a tradeoff between the storage and the query time. For any value $r<n/\log^{\omega(1)} n$, we construct a data structure of $O(n\log n+nr)$ space in $O(n\log n+nr)$ randomized time, and the query time is bounded by $O(\sqrt{n/r}\log n)$ with high probability.

For the segment case, the authors~\cite{ref:EdelsbrunnerIm89} also gave
%Edelsbrunner, Guibas, Hershberger, Seidel, Sharir, Snoeyink, and Welzl~\cite{ref:EdelsbrunnerIm89}
a data structure for the face query problem with the following performance: the preprocessing takes $\widetilde{O}(n^{5/3})$ randomized time, the space is $\widetilde{O}(n^{4/3})$, and the query time is bounded by $\widetilde{O}(n^{1/3})+O(\kappa)$ with high probability, where the notation $\widetilde{O}$ hides a polylogarithmic factor and $\kappa$ is the size of the query face (note that $\kappa$ can be $\Theta(n\alpha(n))$ in the worst case~\cite{ref:EdelsbrunnerTh90} and the face may not be simply connected). Their preprocessing algorithm uses the query algorithm for the line case as a subroutine. If we follow their algorithmic scheme but instead use our new query algorithm for the line case
%Theorem~\ref{theo:query10}
as the subroutine, then the preprocessing time can be reduced to $\widetilde{O}(n^{4/3})$, while the space is still $\widetilde{O}(n^{4/3})$ and the query time is still bounded by $\widetilde{O}(n^{1/3})+O(\kappa)$ with high probability.

\paragraph{Outline.}
The rest of the paper is organized as follows. We define notation and introduce some concepts in Section~\ref{sec:pre}. Our algorithms for computing many faces are described in Sections~\ref{sec:line} and~\ref{sec:segment}. The query problem is discussed in Section~\ref{sec:query}.

\section{Preliminaries}
\label{sec:pre}

We define some notation that is applicable to both the line and segment cases.
Let $S$ be a set of $n$ line segments (a line is considered  a
special line segment) and let $P$ be a set of $m$ points in the plane.
For a subset $S'\subseteq S$, we use $\calA(S')$ to denote the
arrangement of $S'$. For any point $p\in P$, we use $F_p(S')$ to
denote the face of $\calA(S')$ that contains $p$.
A face of $\calA(S')$ is {\em nonempty} if it contains a point of $P$.
Hence, the problem of computing many faces is to compute all
nonempty cells of $\calA(S)$. Note that if a nonempty face contains
more than one point of $P$, then we need to output the face only once.

For any compact region $A$ and a set $Q$ of points in the plane, we often use $Q(A)$ to denote
the subset of $Q$ in $A$, i.e., $Q(A)=Q\cap A$.

%In this section, we introduce some notation and concepts that will be used later.

%For any two points $a$ and $b$ in the plane, we use $\overline{ab}$ to denote the segment with endpoints $a$ and $b$. We call $\overline{ab}$ the {\em hosting segment} of $a$ (resp., $b$). For any compact region $A$ in the plane, we use $\partial A$ to denote the boundary of $A$.

\paragraph{Cuttings.}
Let $H$ be a set of $n$ lines in the plane. For a compact region $A$ in the plane, we use $H_A$ to denote the subset of lines of $H$ that intersect the interior of $A$ (we also say that these lines {\em cross} $A$).
%Following definition for hyperplanes, e.g.,,
A {\em cutting} for $H$ is a collection $\Xi$ of closed cells (each of which is a triangle) with disjoint interiors, which together cover the entire plane~\cite{ref:ChazelleCu93,ref:MatousekRa93}.
The {\em size} of $\Xi$ is the number of cells in $\Xi$.
For a parameter $r$ with $1\leq r\leq n$, a {\em $(1/r)$-cutting} for $H$ is a cutting $\Xi$ satisfying $|H_{\sigma}|\leq n/r$ for every cell $\sigma\in \Xi$.

%A $(1/r)$-cutting for $H$ can be computed in $O()$

%The cutting computed by the algorithm of Chazelle~\cite{ref:ChazelleCu93} is actually a sequence of {\em hierarchical cuttings}.

A cutting $\Xi'$ {\em $c$-refines} another cutting $\Xi$ if every cell of $\Xi'$ is contained in a
single cell of $\Xi$ and every cell of $\Xi$ contains at most $c$
cells of $\Xi'$. A {\em hierarchical $(1/r)$-cutting} (with two
constants $c$ and $\rho$) is a sequence
of cuttings $\Xi_0,\Xi_1,\ldots,\Xi_k$ with the following properties.
$\Xi_0$ is the entire plane. For each $1\leq i\leq k$, $\Xi_i$ is a
$(1/\rho^i)$-cutting of size $O(\rho^{2i})$ which $c$-refines
$\Xi_{i-1}$. In order to make
$\Xi_k$ a $(1/r)$-cutting, we set $k=\Theta(\log r)$ so that
$\rho^{k-1}<r\leq \rho^k$. Hence, the size of $\Xi_k$ is $O(r^2)$. If a cell $\sigma\in
\Xi_{i-1}$ contains a cell $\sigma'\in \Xi_i$, we say that $\sigma$ is
the {\em parent} of $\sigma'$ and $\sigma'$ is a {\em child} of
$\sigma$. As such, one could view $\Xi$ as a tree structure in which each node corresponds to a cell $\sigma\in \Xi_i$, $0\leq i\leq k$.

For any $1\leq r\leq n$, a hierarchical $(1/r)$-cutting of size
$O(r^2)$ for $H$ (together with the sets $H_{\sigma}$ for every cell
$\sigma$ of $\Xi_i$ for all $i=0,1,\ldots,k$) can be computed in
$O(nr)$ time by Chazelle's algorithm~\cite{ref:ChazelleCu93}.

\section{Computing many cells in arrangements of lines}
\label{sec:line}

In this section, we consider the line case for computing many cells. Let $S$ be a set of $n$ lines and $P$ be a set of $m$ points in the plane. Our goal is to compute the nonempty cells of the arrangement $\calA(S)$.
%We use $\calA(S)$ to denote the arrangement of all lines of $S$.
%A face of $\calA(S)$ is called a {\em non-empty face} if it contains at least one point of $S$ and called {\em empty face} otherwise. The problem is to compute all non-empty faces of $\calA(S)$. Note that if a non-empty face contains more than one point of $P$, the face only needs to be output once.
For ease of exposition, we make a general position assumption that no line of $S$ is vertical, no three lines of $S$ are concurrent, and no point of $P$ lies on a line of $S$. Degenerate cases can be handled by standard techniques~\cite{ref:EdelsbrunnerSi90}. Under the assumption, each point of $P$ is in the interior of a face of $\calA(S)$.

First of all, if $m\geq n^2/2$, then the problem can be solved in $O(m\log n)$ time using the straightforward algorithm mentioned in Section~\ref{sec:intro} (i.e., first compute $\calA(S)$ and then find the nonempty cells using point location). In what follows, we assume that $m<n^2/2$.
%Using the divide-and-conquer approach induced by a cutting of $S$,
Our algorithm follows the high-level scheme of Agarwal~\cite{ref:AgarwalPa902} by using a cutting of $S$ to divide the problem into many subproblems. The difference is that we develop an improved algorithm for solving each subproblem.
In the following, we first present an algorithm of $O(n\log n+m\sqrt{n\log n})$ time in Section~\ref{sec:first}, and then use it to solve each subproblem and thus obtain our main algorithm with the asserted time in Section~\ref{sec:second}.

\subsection{The first algorithm}
\label{sec:first}

We say that $S$ and $P$ are in the primal plane and we consider the problem in the dual plane. Let $S^*$ be the set of dual points of $S$ and let $P^*$ be the set of dual lines of $P$.\footnote{We use the following duality~\cite{ref:deBergCo08}: A point $(a,b)$ in the primal plane is dual to the line $y=ax-b$ in the dual plane; a line $y=cx+d$ in the primal plane is dual to the point $(c,-d)$ in the dual plane.} Consider a point $p\in P$ and the face $F_p(S)$ of $\calA(S)$ that contains $p$. In the dual plane, the dual line $p^*$ of $p$ partitions $S^*$ into two subsets and the portions of the convex hulls of the two subsets between their inner common tangents are dual to the face $F_p(S)$~\cite{ref:AgarwalPa902,ref:EdelsbrunnerIm89}; e.g., see Fig~\ref{fig:dual}.

\begin{figure}[t]
\begin{minipage}[t]{\textwidth}
\begin{center}
\includegraphics[height=1.6in]{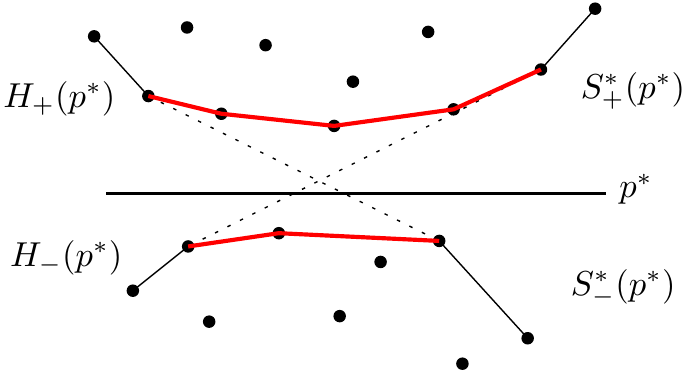}
\caption{\footnotesize Illustrating the dual plane: The (red) thick edges between the two inner common tangents (the dotted segments) are dual to $F_p(S)$.}
\label{fig:dual}
\end{center}
\end{minipage}
\vspace{-0.15in}
\end{figure}

Let $S^*_+(p^*)$ denote the subset of $S^*$ above $p^*$ and $S^*_-(p^*)$ the subset of $S^*$ below $p^*$ (note that $p^*$ is not vertical). We use $H_+(p^*)$ to denote the half hull of the convex hull of
$S^*_+(p^*)$ facing $p^*$ (e.g., if $p^*$ is horizontal, then
$H_+(p^*)$ is the lower hull; for this reason, we call $H_+(p^*)$ the {\em lower hull}; see Fig~\ref{fig:dual}); similarly, we use
$H_-(p^*)$ to denote the half hull of the convex hull of $S^*_-(p^*)$
facing $p^*$ and we call it the {\em upper hull}.
According to the above discussion, $F_p(S)$ is dual to
the portions of $H_+(p^*)$ and $H_-(p^*)$ between their inner common
tangents, and we use $F^*_p(S)$ to denote the dual of $F_p(S)$. Our
algorithm to be presented below will implicitly determine $H_+(p^*)$
and $H_-(p^*)$ (more precisely, each of them is maintained in a
binary search tree of height $O(\log n)$
that can support standard binary search in $O(\log n)$ time), after
which their inner common tangents can be computed in $O(\log n)$
time~\cite{ref:GuibasCo91}
and then $F^*_p(S)$ can be output in
additional $O(|F^*_p(S)|)$ time. Again, if $F^*_p(S)$ is the same for
multiple points $p\in P$, then $F^*_p(S)$ will be output only once.
In the following, depending on the context, a convex hull (resp.,
upper hull, lower hull) may refer to a binary search tree that
represents it. For example, ``computing $H_+(p^*)$'' means
``computing a binary search tree that represents $H_+(p^*)$''.

We compute a hierarchical $(1/r)$-cutting $\Xi_0,\Xi_1,\ldots,\Xi_k$
for the lines of $P^*$ with $k$ and a constant $\rho$ as defined in
	Section~\ref{sec:pre}, and with $r$ to be determined later, along
	with the subsets $P^*_\sigma$ of lines of $P^*$ crossing $\sigma$
	for all cells $\sigma$ of $\Xi_i$ for all $i=0,1,\ldots,k$.
This can be done in $O(mr)$ time~\cite{ref:ChazelleCu93}. Recall that $k=O(\log r)$.
For each point $l^*\in S^*$, we find the cell $\sigma\in \Xi_i$ containing $l^*$ for all $i=0,1,\ldots,k$ and store $l^*$ in the set $S^*(\sigma)$, i.e., $S^*(\sigma)=S^*\cap \sigma$.
Computing the sets $S^*(\sigma)$ for all cells $\sigma\in \Xi_i$, $i=0,1,\ldots,k$, takes $O(n\log r)$ time~\cite{ref:ChazelleCu93}.
As each point of $S^*$ is stored in a single cell of $\Xi_i$, for each $0\leq i\leq k$, the total size of $S^*(\sigma)$ for all cells $\sigma$ of the cutting is $O(n\log r)$.
If initially we sort all points of $S^*$ by $x$-coordinate, then we can obtain the sorted lists of all sets $S^*(\sigma)$ of all cells in $O(n\log r)$ time in total. Using the sorted lists, for each cell $\sigma\in \Xi_i$, $i=0,1,\ldots,k$, we compute the convex hull of $S^*(\sigma)$ in $O(|S^*(\sigma)|)$ time (and store it in a balanced binary search tree). All above takes $O(mr+n\log r+n\log n)$ time in total.

Next, for each cell $\sigma$ of the last cutting $\Xi_k$, if $|S^*(\sigma)|>n/r^2$, then we further triangulate $\sigma$ (which itself is a triangle) into $\Theta(|S^*(\sigma)|\cdot r^2/n)$ triangles each of which contains at most $n/r^2$ points of $S^*$. As points of $S^*(\sigma)$ are already sorted by $x$-coordinate, the triangulation can be easily done in $O(|S^*(\sigma)|)$ time, as follows. By sweeping the points of $S^*(\sigma)$ from left to right, we can partition $\sigma$ in to $\lceil |S^*(\sigma)|\cdot r^2/n \rceil$ trapezoids each of which contains no more than $n/r^2$ points of $S^*$. Then, we partition each trapezoid into two triangles. In this way, $\sigma$ is triangulated into at most $2\lceil |S^*(\sigma)|\cdot r^2/n \rceil$ triangles each containing at most $n/r^2$ points of $S^*$.
Processing all cells of $\Xi_k$ as above takes $O(n)$ time in total.
For convenience, we use $\Xi_{k+1}$ to refer to the set of all new triangles obtained above. Since $\Xi_k$ has $O(r^2)$ cells, by our way of computing the triangles of $\Xi_{k+1}$, the size of $\Xi_{k+1}$ is bounded by $O(r^2)$. For each triangle $\sigma'\in \Xi_{k+1}$, if $\sigma'$ is in the cell $\sigma$ of $\Xi_k$, then we also say that $\sigma$ is the {\em parent} of $\sigma'$ and $\sigma'$ is a {\em child} of $\sigma$ (note that the number of children of $\sigma$ may not be $O(1)$).
We also define $S^*(\sigma')=S^*\cap \sigma'$, and compute and store the convex hull of $S^*(\sigma')$. This takes $O(n)$ time for all triangles $\sigma'$ of $\Xi_{k+1}$, thanks to the presorting of $S^*$.

For reference purpose, we consider the above {\em the preprocessing step} of our algorithm.

For each line $p^*\in P^*$, we process it as follows. Without loss of generality, we assume that $p^*$ is horizontal. Let $\Psi(p^*)$ denote the set of all cells $\sigma$ of $\Xi_i$ crossed by $p^*$, for all $i=0,1,\ldots,k$. Let $\Psi_{k+1}(p^*)$ denote the set of all cells $\sigma$ of $\Xi_{k+1}$ crossed by $p^*$. For each cell $\sigma\in \Psi_{k+1}(p^*)$, we use $\sigma_+(p^*)$ to denote the portion of $\sigma$ above $p^*$, and let $S^*(\sigma_+(p^*))=S^*\cap \sigma_+(p^*)$. Next we define a set $\Psi_+(p^*)$ of cells of $\Xi_{i}$, $i=0,1,\ldots,k+1$.
For each cell $\sigma'\in \Psi(p^*)$, suppose $\sigma'$ is in $\Xi_i$ for some $i\in [0,k]$. For each child $\sigma$ of $\sigma'$ (thus $\sigma\in \Xi_{i+1}$), if $\sigma$ is completely above the line $p^*$, then $\sigma$ is in  $\Psi_+(p^*)$.
%Next we define a set $\Psi_+(p^*)$ of cells of $\Xi_{i}$, $i=0,1,\ldots,k+1$,
%such that $S^*_+(p^*)$ is the union of $\bigcup_{\sigma\in \Psi_+(p^*)}S^*(\sigma)$ and $\bigcup_{\sigma\in \Psi_{k+1}(p^*)}S^*(\sigma_+(p^*))$.
%In addition, if $i=k$ and $\sigma'$ is crossed by $p^*$, then we add to $Q_+(p^*)$ the subset of the points of $S^*(\sigma')$ above $p^*$.
We have the following lemma.

\begin{lemma}\label{lem:10}
$S^*_+(p^*)$ is the union of $\bigcup_{\sigma\in \Psi_+(p^*)}S^*(\sigma)$ and $\bigcup_{\sigma\in \Psi_{k+1}(p^*)}S^*(\sigma_+(p^*))$.
\end{lemma}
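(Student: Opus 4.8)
The plan is to show that every point of $S^*_+(p^*)$ lies in exactly one of the two collections, and conversely every point in the two collections lies above $p^*$, hence in $S^*_+(p^*)$. The key observation is that the cells of $\Xi_{k+1}$ form a partition of the entire plane (since $\Xi_k$ covers the plane and each cell of $\Xi_k$ is subdivided into triangles covering it), and each point of $S^*$ lies in exactly one cell of $\Xi_{k+1}$. So fix a point $l^* \in S^*$ that lies above $p^*$, and let $\sigma$ be the unique cell of $\Xi_{k+1}$ containing $l^*$.

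First I would trace the ancestor chain of $\sigma$: let $\sigma = \sigma_{k+1}$, and for $i = k, k-1, \ldots, 0$ let $\sigma_i \in \Xi_i$ be the parent of $\sigma_{i+1}$, so $\sigma_0 \supseteq \sigma_1 \supseteq \cdots \supseteq \sigma_{k+1}$ and $l^* \in \sigma_i$ for all $i$. Since $l^*$ lies strictly above $p^*$ and $l^* \in \sigma_0$ (the whole plane, which is trivially crossed by $p^*$), consider the largest index $j$ such that $\sigma_j$ is crossed by $p^*$ (equivalently $p^*$ meets the interior of $\sigma_j$); such a $j$ exists since $\sigma_0$ is crossed, and $j \le k+1$. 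Two cases arise. If $j \le k$, then $\sigma_j \in \Psi(p^*)$, and its child $\sigma_{j+1}$ is not crossed by $p^*$; since $\sigma_{j+1}$ contains $l^*$ which is strictly above $p^*$, and $\sigma_{j+1}$ does not meet $p^*$'s interior, $\sigma_{j+1}$ must lie completely above $p^*$ — so by definition $\sigma_{j+1} \in \Psi_+(p^*)$, and $l^* \in S^*(\sigma_{j+1})$. If instead $j = k+1$, then $\sigma = \sigma_{k+1}$ itself is crossed by $p^*$, so $\sigma \in \Psi_{k+1}(p^*)$, and since $l^*$ is strictly above $p^*$, $l^* \in S^* \cap \sigma_+(p^*) = S^*(\sigma_+(p^*))$. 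This shows $S^*_+(p^*)$ is contained in the union. Moreover these two cases are mutually exclusive for a given $l^*$ (they correspond to different values of $j$), and within each collection the cells have disjoint interiors hence $l^*$ lands in only one set; for the cells of $\Psi_+(p^*)$ we also need that no $l^*$ is double-counted across different cells of $\Psi_+(p^*)$, which holds because distinct cells of $\Psi_+(p^*)$ lie in distinct levels only along distinct ancestor chains — more simply, each such cell is a child in the hierarchical cutting and a point determines its unique ancestor at each level, so it belongs to at most one cell that was placed in $\Psi_+(p^*)$.

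For the reverse inclusion, if $\sigma \in \Psi_+(p^*)$ then by definition $\sigma$ lies completely above $p^*$, so every point of $S^*(\sigma) = S^* \cap \sigma$ is above $p^*$, i.e., in $S^*_+(p^*)$; and if $\sigma \in \Psi_{k+1}(p^*)$ then $S^*(\sigma_+(p^*)) = S^* \cap \sigma_+(p^*)$ consists of points of $S^*$ above $p^*$ by the definition of $\sigma_+(p^*)$, again giving membership in $S^*_+(p^*)$. Combining both directions yields the claimed equality (as sets; the remark above about disjointness is what makes it a genuine partition, which is what the later algorithm needs).

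The main obstacle I anticipate is not any deep argument but getting the bookkeeping of the hierarchical structure exactly right: one must be careful that $\Xi_{k+1}$ is indeed a refinement of $\Xi_k$ covering the whole plane (the excerpt constructs it only for cells $\sigma$ with $|S^*(\sigma)| > n/r^2$, so cells not further triangulated must be understood as their own single child in $\Xi_{k+1}$), and that the definition of $\Psi_+(p^*)$ "children of crossed cells that lie entirely above $p^*$" correctly captures precisely the maximal subtrees hanging off the crossed portion of the cutting on the upper side. Once the partition-of-the-plane-into-$\Xi_{k+1}$-cells viewpoint is fixed, the proof is the short case analysis on where along the ancestor chain $p^*$ stops crossing.
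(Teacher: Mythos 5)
Your proof is correct and follows essentially the same route as the paper's: the easy direction is immediate from the definitions, and for the hard direction both arguments walk the ancestor chain of a point $l^*\in S^*_+(p^*)$ and split on whether the crossing by $p^*$ persists all the way down to the $\Xi_{k+1}$ cell (giving $\Psi_{k+1}(p^*)$) or stops at some level whose child containing $l^*$ then lies entirely above $p^*$ (giving $\Psi_+(p^*)$). Your side remark about treating untriangulated cells of $\Xi_k$ as their own child in $\Xi_{k+1}$ is exactly the convention the paper's proof implicitly uses, so no substantive difference remains.
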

\begin{proof}
First of all, by definition, all points of $S^*(\sigma)$ for all cells $\sigma\in \Psi_+(p^*)$ are above $p^*$ and thus are in $S^*_+(p^*)$; similarly, all points of $S^*(\sigma_+(p^*))$ for all cells $\sigma\in \Psi_{k+1}(p^*)$ are above $p^*$ and thus are in $S^*_+(p^*)$. Hence, both $\bigcup_{\sigma\in \Psi_+(p^*)}S^*(\sigma)$ and $\bigcup_{\sigma\in \Psi_{k+1}(p^*)}S^*(\sigma_+(p^*))$ are subsets of $S^*_+(p^*)$.

On the other hand, consider a point $l^*\in S^*_+(p^*)$. By definition, $l^*$ is above the line $p^*$. It suffices to prove that $l^*$ must be in either $\bigcup_{\sigma\in \Psi_+(p^*)}S^*(\sigma)$ or $\bigcup_{\sigma\in \Psi_{k+1}(p^*)}S^*(\sigma_+(p^*))$. If $l^*$ is in a cell $\sigma$ of $\Psi_{k+1}$ that is crossed by $p^*$ (thus $\sigma\in \Psi_{k+1}(p^*)$), then since $l^*$ is above $p^*$, $l^*$ must be in $S^*(\sigma_+(p^*))$ and thus is in $\bigcup_{\sigma\in \Psi_{k+1}(p^*)}S^*(\sigma_+(p^*))$. Otherwise, $l^*$ is not in any cell of $\Xi_{k+1}$ crossed by $p^*$. Hence, $l^*$ must be in a cell $\sigma\in \Xi_{i+1}$ that is not crossed by $p^*$ but whose parent cell $\sigma'\in \Xi_i$ is crossed by $p^*$, for some $i\in [0,k]$, because $\Xi_0$, which consists of a single cell that is the entire plane, is crossed by $p^*$. As such, $\sigma'$ is in $\Psi(p^*)$. Further, since $\sigma$ is not crossed by $p^*$ and $\sigma$ contains $l^*$, which is above $p^*$, $\sigma$ must be completely above $p^*$. Therefore, $\sigma$ must be in $\Psi_+(p^*)$, and thus $l^*$ is in $\bigcup_{\sigma\in \Psi_+(p^*)}S^*(\sigma)$.
\end{proof}

Lemma~\ref{lem:10} implies that if we have convex hulls of $S^*(\sigma)$ for all cells $\sigma\in \Psi_+(p^*)$ and convex hulls of $S^*(\sigma_+(p^*))$ for all cells $\sigma\in \Psi_{k+1}(p^*)$, then $H_+(p^*)$ is the lower hull of all these convex hulls. Define $\calH_+(p^*)$ as the set of convex hulls mentioned above. Thanks to our preprocessing step, we have the following lemma.

\begin{lemma}\label{lem:20}
We can obtain (binary search trees representing) the convex hulls of $\calH_+(p^*)$ for all lines $p^*\in P^*$ in $O(mr+mn/r)$ time.
\end{lemma}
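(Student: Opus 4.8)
The plan is to bound, over all lines $p^* \in P^*$, the total number of convex hulls in $\calH_+(p^*)$ together with their total size, and then to show that each such convex hull can be retrieved (not recomputed) in time proportional to the size of a balanced binary search tree representing it. Recall from the preprocessing step that for every cell $\sigma$ of every $\Xi_i$ (including $\Xi_{k+1}$) we have already stored a balanced binary search tree representing the convex hull of $S^*(\sigma)$. So for the cells $\sigma \in \Psi_+(p^*)$, nothing needs to be recomputed: we simply need to \emph{identify} which cells belong to $\Psi_+(p^*)$, which we do by walking the hierarchical cutting restricted to the cells crossed by $p^*$. The only genuinely new work is handling the cells $\sigma \in \Psi_{k+1}(p^*)$, i.e.\ the $\Xi_{k+1}$-triangles actually crossed by $p^*$: for each such $\sigma$ we must produce the convex hull of $S^*(\sigma_+(p^*))$, the subset of $S^*(\sigma)$ lying above $p^*$.

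First I would establish the combinatorial bounds. Since $p^*$ is a single line, a standard property of hierarchical $(1/r)$-cuttings is that the number of cells of $\Xi_i$ crossed by $p^*$ is $O(\rho^i)$ (a line crosses $O(\rho^i)$ of the $O(\rho^{2i})$ cells), so $|\Psi(p^*)| = \sum_{i=0}^{k} O(\rho^i) = O(\rho^k) = O(r)$. Each cell of $\Psi(p^*)$ has $O(1)$ children among the $\Xi_{i+1}$'s for $i \le k-1$, contributing $O(r)$ cells to $\Psi_+(p^*)$ at those levels; the cells of $\Xi_{k+1}$ need a separate count, since a cell of $\Xi_k$ may have many children. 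But the children of $\sigma \in \Xi_k$ are the $\Theta(|S^*(\sigma)| r^2/n)$ triangles into which $\sigma$ was (possibly) subdivided, and a horizontal line $p^*$ crossing $\sigma$ crosses $O(|S^*(\sigma)| r^2/n)$ of them in the worst case — actually, since the triangulation came from left-to-right sweeping into vertical trapezoids, a horizontal line crosses all of them, so this count is $O(|S^*(\sigma)| r^2/n)$. Summing over the $O(r)$ cells $\sigma \in \Xi_k$ crossed by $p^*$, and using that $|S^*(\sigma)| \le n/r$ for such $\sigma$ (it is a $(1/r)$-cutting), we get $\sum O(|S^*(\sigma)| r^2/n) = O(r)$ cells contributed to $\Psi_{k+1}(p^*) \cup \Psi_+(p^*)$ at the last level. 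Hence $|\calH_+(p^*)| = O(r)$, and summed over all $m$ lines this is $O(mr)$ convex hulls, accounting for the $O(mr)$ term.

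For the $O(mn/r)$ term, note that the convex hulls in $\calH_+(p^*)$ coming from $\Psi_+(p^*)$ are already built, so returning a pointer to each costs $O(1)$; the cost of the walk identifying $\Psi(p^*)$, $\Psi_+(p^*)$, $\Psi_{k+1}(p^*)$ is $O(|\Psi(p^*)| + |\calH_+(p^*)|) = O(r)$ per line, $O(mr)$ total. The remaining cost is building, for each $\sigma \in \Psi_{k+1}(p^*)$, a balanced BST for the convex hull of $S^*(\sigma_+(p^*))$: since $S^*(\sigma)$ has at most $n/r^2$ points and they are presorted by $x$-coordinate, we filter out those below $p^*$ and run the standard $O(|S^*(\sigma)|)$-time Graham-scan-on-sorted-input, producing a BST in $O(|S^*(\sigma)|) = O(n/r^2)$ time. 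Over the $O(r)$ cells $\sigma \in \Psi_{k+1}(p^*)$ for a fixed $p^*$ this is $O(r \cdot n/r^2) = O(n/r)$, hence $O(mn/r)$ over all lines. Adding the preprocessing-step cost already absorbed earlier gives the claimed $O(mr + mn/r)$.

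The main obstacle I anticipate is the last-level counting: one must be careful that although a cell $\sigma \in \Xi_k$ can have arbitrarily many children in $\Xi_{k+1}$ (so the hierarchical-cutting axioms no longer bound things for free), the number of those children \emph{crossed by a fixed line}, and the sizes of their point sets, are still controlled — the first by the $\Theta(|S^*(\sigma)| r^2/n)$ subdivision count and the second by the $n/r^2$ ceiling built into the triangulation, with the crucial input being $|S^*(\sigma)| \le n/r$ for cells of the $(1/r)$-cutting crossed by $p^*$. Once that bookkeeping is pinned down, everything else is routine: presorting makes each local convex-hull recomputation linear, and the walk over the cutting is linear in the number of cells visited.
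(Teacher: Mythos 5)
Your overall plan is the same as the paper's: reuse the hulls already stored for the cells of $\Psi_+(p^*)$, and rebuild hulls only for the $\Xi_{k+1}$-cells crossed by $p^*$, in linear time per cell thanks to the presorting. However, there is a genuine gap in your last-level counting. You justify the bound on the number of last-level cells crossed by a fixed $p^*$ by asserting that $|S^*(\sigma)|\leq n/r$ for cells $\sigma\in\Xi_k$ crossed by $p^*$ ``since it is a $(1/r)$-cutting.'' This is false: the hierarchical cutting is built for the $m$ \emph{lines} of $P^*$, so its guarantee is that each cell of $\Xi_k$ is crossed by at most $m/r$ lines of $P^*$; it says nothing about how many \emph{points} of $S^*$ a cell contains (indeed, a single cell of $\Xi_k$ may contain all $n$ points of $S^*$, which is exactly why the extra subdivision into $\Xi_{k+1}$ was performed). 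Moreover, even granting your premise, the arithmetic does not give what you claim: $O(r)$ crossed cells each contributing $O\left(|S^*(\sigma)|r^2/n\right)$ children would yield $O(r^2)$, not $O(r)$. And the per-line bound $|\calH_+(p^*)|=O(r)$ is in fact not true in general: if all of $S^*$ sits in one cell $\sigma_0\in\Xi_k$ crossed by $p^*$, then $\sigma_0$ has $\Theta(r^2)$ children (vertical slabs split into triangles), and a near-horizontal $p^*$ can cross $\Theta(r^2)$ of them, so your per-line route breaks down. Since your derivation of both the $O(mr)$ term and the $O(mn/r)$ term passes through this per-line bound, the argument as written does not establish the lemma.

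The conclusion is nonetheless correct, and the repair is the amortized counting the paper uses: a cell of $\Xi_{k+1}$ is examined (and possibly has a partial hull rebuilt) only for lines $p^*\in P^*$ that cross its parent cell in $\Xi_k$, and by the cutting property for $P^*$ there are only $O(m/r)$ such lines per parent. Since $\Xi_{k+1}$ has $O(r^2)$ cells, the total number of (line, last-level cell) incidences over all $p^*\in P^*$ is $O(r^2\cdot m/r)=O(mr)$; combined with the $O(n/r^2)$ point bound per $\Xi_{k+1}$-cell this gives $O(mr\cdot n/r^2)=O(mn/r)$ total hull-building time, and the identification of $\Psi(p^*)$, $\Psi_+(p^*)$, $\Psi_{k+1}(p^*)$ over all lines is $O(mr)$ because $\sum_{p^*}|\Psi(p^*)|=\sum_i\sum_{\sigma\in\Xi_i}|P^*_\sigma|=O(mr)$. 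In short: bound the totals over all $m$ lines by charging last-level cells to the lines crossing their $\Xi_k$ parents, rather than attempting a uniform per-line bound.
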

\begin{proof}
First of all, the sets $\Psi(p^*)$ for all lines $p^*\in P^*$ can be obtained in $O(mr)$ time when we compute the cutting~\cite{ref:ChazelleCu93}. This also means that the total size $|\Psi(p^*)|$ for all $p^*\in P^*$ is $O(mr)$; as a matter of fact, the total size $|\Psi(p^*)|$ for all $p^*\in P^*$ is equal to $\sum_{i=0}^k\sum_{\sigma\in \Xi_i}|P^*_{\sigma}|$, which is $O(mr)$~\cite{ref:ChazelleCu93}.

For each cell $\sigma'\in \Psi(p^*)$, suppose $\sigma'$ is in $\Xi_i$, for some $i\in [0,k]$.
We check every child $\sigma$ (in $\Xi_{i+1}$) of $\sigma'$ to determine whether it is in $\Psi_+(p^*)$.
This computes the set $\Psi_+(p^*)$. For
the runtime,
since $\sigma'$ has $O(1)$ children $\sigma\in \Xi_{i+1}$ for $i<k$,
%Hence, it takes $O(1)$ time to find the child cells $\sigma\in \Xi_{i+1}$ of $\sigma'$ that belong to $\Psi_+(p^*)$.
all cells $\sigma$ of $\Psi_+(p^*)$ that are not in $\Xi_{k+1}$ can
be obtained in a total of $O(mr)$ time for all $p^*\in P^*$. For the
time we spend on computing cells of $\Psi_+(p^*)$ that are in
$\Xi_{k+1}$, observe that each cell $\sigma$ of $\Xi_{k+1}$ will be
checked $t$ times in the entire algorithm for all $p^*\in P^*$, where
$t$ is the number of lines of $P^*$ crossing the parent $\sigma'\in
\Xi_k$ of $\sigma$. According to the property of the cutting $\Xi_k$, $t=O(m/r)$. Hence, each cell $\sigma$ of $\Xi_{k+1}$ will be checked $O(m/r)$ times in the entire algorithm. As $\Xi_{k+1}$ has $O(r^2)$ cells, the total time for finding cells of $\Psi_+(p^*)$ that are in $\Xi_{k+1}$ is bounded by $O(mr)$.
As such, computing the set $\Psi_+(p^*)$ takes $O(mr)$ time. Recall that for each cell $\sigma\in \Xi_i$, $i=0,1,\ldots,k$, a binary search tree representing the convex hull of $S^*(\sigma)$ has been computed in the preprocessing step. Hence, convex hulls of all cells of $\Psi_+(p^*)$ are available.

We proceed to compute the convex hulls of $S^*(\sigma_+(p^*))$ for all cells $\sigma\in \Psi_{k+1}(p^*)$. We first compute the set $\Psi_{k+1}(p^*)$, using an algorithm similar to above. For each cell $\sigma'\in \Psi(p^*)$, which is already computed above, if $\sigma'$ is in $\Xi_k$, then we check every child of $\sigma'$ and determine whether it is in $\Psi_{k+1}(p^*)$. In this way, the sets $\Psi_{k+1}(p^*)$ for all lines $p^*\in P^*$ can be computed in $O(mr)$ time. This also implies that $\sum_{p^*\in P^*}|\Psi_{k+1}(p^*)|=O(mr)$. Next, for each $p^*\in P^*$, for each cell $\sigma\in \Psi_{k+1}(p^*)$, we compute the convex hull of $S^*(\sigma_+(p^*))$, which can be done in $O(|S^*(\sigma)|)$ time since points of $S^*(\sigma)$ are already sorted. As $|S^*(\sigma)|\leq n/r^2$ for all cells $\sigma\in \Xi_{k+1}$, $\Psi_{k+1}(p^*)\subseteq \Xi_{k+1}$, and $\sum_{p^*\in P^*}|\Psi_{k+1}(p^*)|=O(mr)$, the total time for computing the convex hulls of $S^*(\sigma_+(p^*))$ for all cells $\sigma\in \Psi_{k+1}(p^*)$ for all lines $p^*\in P^*$ is $O(mr\cdot n/r^2)$, which is $O(mn/r)$.

In summary, binary search trees representing convex hulls of $\calH_+(p^*)$ for all lines $p^*\in P^*$ can be obtained in $O(mr+mn/r)$ time in total.
\end{proof}

With the preceding lemma, our next goal is to compute the lower hull of all convex hulls of $\calH_+(p^*)$. To this end, the observation in the following lemma is critical.

\begin{lemma}\label{lem:30}
For each $p^*\in P^*$, the convex hulls of $\calH_+(p^*)$ are pairwise disjoint.
\end{lemma}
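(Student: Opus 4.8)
The plan is to show that the cells in $\calH_+(p^*)$ have pairwise disjoint interiors (indeed are drawn from cells of the hierarchical cutting at various levels that tile disjointly), and that each contributing region is a subset of one such cell, which forces the point sets $S^*(\sigma)$ and $S^*(\sigma_+(p^*))$ to lie in disjoint regions of the plane; disjoint point sets obviously have disjoint convex hulls only if the regions themselves are disjoint, so the real content is the geometric disjointness of the regions.

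\begin{proof}
Recall that $\calH_+(p^*)$ consists of two kinds of convex hulls: the hull of $S^*(\sigma)$ for each cell $\sigma\in \Psi_+(p^*)$, and the hull of $S^*(\sigma_+(p^*))$ for each cell $\sigma\in \Psi_{k+1}(p^*)$. For a cell $\sigma\in \Psi_+(p^*)$, the hull of $S^*(\sigma)$ is contained in $\sigma$; for a cell $\sigma\in \Psi_{k+1}(p^*)$, the hull of $S^*(\sigma_+(p^*))$ is contained in $\sigma_+(p^*)\subseteq \sigma$. Hence it suffices to prove that the (closed) cells hosting these hulls have pairwise disjoint interiors.

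Consider two distinct cells $\sigma_1,\sigma_2$ that host hulls of $\calH_+(p^*)$. Suppose $\sigma_1\in\Xi_{i_1}$ and $\sigma_2\in\Xi_{i_2}$ (where we allow index $k+1$). If $i_1=i_2$, then $\sigma_1$ and $\sigma_2$ are two distinct cells of the same cutting $\Xi_{i_1}$, and by definition a cutting is a collection of cells with pairwise disjoint interiors, so we are done. If $i_1\neq i_2$, say $i_1<i_2$, I claim $\sigma_1$ is \emph{not} an ancestor of $\sigma_2$ in the cutting hierarchy. Indeed, every cell in $\Psi_+(p^*)$ lies completely above $p^*$ and is a child of some cell of $\Psi(p^*)$ that is crossed by $p^*$; likewise every cell in $\Psi_{k+1}(p^*)$ is crossed by $p^*$. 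So $\sigma_1$ is either crossed by $p^*$ (if $\sigma_1\in\Psi_{k+1}(p^*)$) or is a child of a cell crossed by $p^*$ and lies completely above $p^*$. In the first case, if $\sigma_1$ were an ancestor of $\sigma_2$ then $\sigma_2\subseteq\sigma_1$ would force $\sigma_2$ to be crossed by $p^*$ or lie on $p^*$; but no cell hosting a hull of $\calH_+(p^*)$ is crossed by $p^*$ while also being strictly above it — wait, this needs care, so let me argue directly from containment. If $\sigma_1$ is an ancestor of $\sigma_2$, then the parent $\sigma_2'$ of $\sigma_2$ satisfies $\sigma_2'\subseteq\sigma_1$ (with equality possible). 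By construction of $\Psi_+(p^*)$ and $\Psi_{k+1}(p^*)$, the cell $\sigma_2'$ is crossed by $p^*$. On the other hand, $\sigma_1$ lies completely above $p^*$ (if $\sigma_1\in\Psi_+(p^*)$), contradicting $\sigma_2'\subseteq\sigma_1$ and $\sigma_2'$ crossed by $p^*$; or $\sigma_1\in\Psi_{k+1}(p^*)$, in which case $i_1=k+1$ is the largest index and $\sigma_1$ has no descendants in the hierarchy, so it cannot be an ancestor of $\sigma_2$. Thus in all cases $\sigma_1$ is not an ancestor of $\sigma_2$, and symmetrically $\sigma_2$ is not an ancestor of $\sigma_1$.

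Now, since $\sigma_1$ and $\sigma_2$ are cells of a hierarchical cutting with neither being an ancestor of the other, their interiors are disjoint: let $\sigma_1^{(j)}$ and $\sigma_2^{(j)}$ denote the ancestors of $\sigma_1$ and $\sigma_2$ at level $j$ (taking $\sigma$ itself once $j$ reaches its level). Let $j$ be the smallest level at which $\sigma_1^{(j)}\neq\sigma_2^{(j)}$; such a level exists and is at most $\min\{i_1,i_2\}$ since neither cell is an ancestor of the other. Then $\sigma_1^{(j)}$ and $\sigma_2^{(j)}$ are distinct cells of the same cutting $\Xi_j$, hence have disjoint interiors, and therefore so do $\sigma_1\subseteq\sigma_1^{(j)}$ and $\sigma_2\subseteq\sigma_2^{(j)}$. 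Consequently the hulls hosted in $\sigma_1$ and in $\sigma_2$ lie in regions with disjoint interiors, and since each hull is contained in its host region, the two hulls are disjoint. As $\sigma_1,\sigma_2$ were arbitrary distinct host cells, the convex hulls of $\calH_+(p^*)$ are pairwise disjoint.
\end{proof}

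\textbf{The main obstacle} I expect is the case analysis for ``neither cell is an ancestor of the other'': one must carefully use the two defining facts — that cells of $\Psi_+(p^*)$ lie strictly above $p^*$ while being children of cells crossed by $p^*$, and that cells of $\Psi_{k+1}(p^*)$ are themselves crossed by $p^*$ — to rule out any containment between two host cells. Once that combinatorial fact is in hand, the geometric disjointness follows routinely from the tiling property of each cutting $\Xi_j$ and the nesting property of the hierarchy.
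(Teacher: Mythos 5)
Your proof is correct and follows essentially the same route as the paper: reduce the claim to disjointness of the host cells drawn from $\Psi_+(p^*)\cup\Psi_{k+1}(p^*)$, then rule out any ancestor/containment relation by playing off the fact that a cell of $\Psi_+(p^*)$ lies entirely above $p^*$ against the fact that the parent of any host cell is crossed by $p^*$. The only difference is cosmetic: you argue the contrapositive (no ancestor relation, hence disjoint interiors, via the lowest level at which the two ancestor chains differ), a step the paper simply attributes to the properties of the hierarchical cutting.
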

\begin{proof}
According to the definition of $\calH_+(p^*)$, each convex hull is inside a cell either in $\Psi_+(p^*)$ or in $\Psi_{k+1}(p^*)$. Hence, to prove the lemma, it suffices to show that cells of $\Psi_+(p^*)\bigcup \Psi_{k+1}(p^*)$ are pairwise disjoint. Note that each cell of $\Psi_+(p^*)\bigcup \Psi_{k+1}(p^*)$ is a cell in $\Xi_i$, for some $i\in [0,k+1]$.

Assume to the contrary that two different cells $\sigma_1$ and $\sigma_2$ of $\Psi_+(p^*)\bigcup \Psi_{k+1}(p^*)$ are not disjoint. Then, by the properties of the hierarchical cutting, one of $\sigma_1$ and $\sigma_2$ must be an ancestor of the other. Without loss of generality, we assume that $\sigma_1$ is an ancestor of $\sigma_2$, and thus $\sigma_2$ is contained in $\sigma_1$. Suppose $\sigma_1$ is in $\Xi_i$ and $\sigma_2$ is in $\Xi_j$ with $0\leq i<j\leq k+1$. As $i<k+1$, by definition, $\sigma_1$ cannot be in $\Psi_{k+1}(p^*)$ and thus must be in $\Psi_+(p^*)$. Hence, $\sigma_1$ must be completely above the line $p^*$. As $\sigma_2$ is contained in $\sigma_1$, $\sigma_2$ is also above $p^*$. As such, $\sigma_2$ cannot be in $\Psi_{k+1}(p^*)$ and thus is in $\Psi_+(p^*)$. By the definition of $\Psi_+(p^*)$, the line $p^*$ must cross the parent cell $\sigma'$ of $\sigma_2$. On the other hand, since $\sigma_1$ is an ancestor of $\sigma_2$ and $\sigma_1\neq \sigma_2$, either $\sigma_1=\sigma'$ or $\sigma_1$ is an ancestor of $\sigma'$. In either case, $\sigma_1$ must contain $\sigma'$. Since $\sigma_1$ is above the line $p^*$, we obtain that $\sigma'$ is also above $p^*$. This incurs contradiction because $p^*$ crosses $\sigma'$.
\end{proof}

With the convex hulls computed in Lemma~\ref{lem:20} and the property in Lemma~\ref{lem:30}, the next lemma computes the lower hull $H_+(p^*)$ for all $p^*\in P^*$.

\begin{lemma}\label{lem:40}
For each $p^*\in P^*$, suppose the convex hulls of $\calH_+(p^*)$ are available; then we can compute (a binary search tree representing) the lower hull $H_+(p^*)$ in $O(|\calH_+(p^*)|\log n)$ time.
\end{lemma}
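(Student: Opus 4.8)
The goal is to merge $|\calH_+(p^*)|$ pairwise-disjoint convex hulls into a single lower hull $H_+(p^*)$, represented as a balanced binary search tree, in $O(|\calH_+(p^*)|\log n)$ time. The key structural fact we are handed is Lemma~\ref{lem:30}: the cells of $\Psi_+(p^*)\cup\Psi_{k+1}(p^*)$, and hence the convex hulls stored in them, are pairwise disjoint. I would first use this disjointness to linearly order the hulls. Since $p^*$ is horizontal and every hull of $\calH_+(p^*)$ lies strictly above $p^*$, and the cells are interior-disjoint triangles, their $x$-projections are ``almost'' disjoint intervals (they can share only endpoints), so the hulls admit a left-to-right order $C_1, C_2, \ldots, C_t$ with $t = |\calH_+(p^*)|$. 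I would compute this order by sorting the hulls by, say, the $x$-coordinate of their leftmost vertex; this costs $O(t\log t) = O(t\log n)$ time, which is within budget. (Strictly speaking, one must argue that for the purpose of building the \emph{lower} hull, any $x$-overlap caused by shared triangle boundaries is harmless because one of the two hulls contributes nothing to the lower envelope in the shared slab; I would note this but not belabor it.)

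**Divide and conquer.** With the hulls in left-to-right order, I would build $H_+(p^*)$ by a balanced binary merge: split $C_1,\ldots,C_t$ into two halves, recursively compute the lower hull of each half, then merge the two resulting lower hulls $L$ and $R$, where all of $L$ lies to the left of all of $R$ in $x$. Merging two such ``$x$-separated'' lower hulls is exactly the classic operation of finding the lower common tangent (bridge) between two disjoint lower hulls, which can be done in $O(\log n)$ time given balanced-BST (or finger-search/concatenable-queue) representations — this is the standard Overmars–van Leeuwen / Guibas et al.\ bridge-finding routine, and the paper already cites \cite{ref:GuibasCo91} for $O(\log n)$-time common-tangent computation between hulls stored in such trees. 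After finding the bridge, one splits $L$ at its bridge vertex, splits $R$ at its bridge vertex, and concatenates the left part of $L$, the bridge edge, and the right part of $R$ using $O(\log n)$-time split/join operations on balanced search trees. Each merge thus takes $O(\log n)$ time, and the recursion tree has $O(t)$ nodes, giving total time $O(t\log n) = O(|\calH_+(p^*)|\log n)$, as claimed. The output is a balanced BST of height $O(\log n)$ representing $H_+(p^*)$, exactly as required for the subsequent common-tangent step between $H_+(p^*)$ and $H_-(p^*)$.

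**Main obstacle.** The only real subtlety — and the step I would be most careful about — is the $O(\log n)$-time merge of two $x$-separated lower hulls represented by balanced BSTs: one needs the BST representation to support (i) binary search guided by tangent/slope comparisons so that the bridge can be found in $O(\log n)$ time, and (ii) split and join in $O(\log n)$ time so the concatenation is cheap. This is all standard (concatenable queues / Overmars–van Leeuwen structure, or the compact-hull representation of \cite{ref:GuibasCo91}), so I would simply invoke it, noting that the hulls produced by the preprocessing step are already stored in such trees and that split/join preserve the invariant. A secondary point worth stating explicitly is that the recursion is over the \emph{number} of hulls $t$, not their total size, so even though individual hulls may be large, no single merge touches a hull's interior beyond $O(\log n)$ work, and the bound does not degrade; the logarithmic factor is $\log n$ (not $\log t$) only because the BSTs store up to $n$ points overall, which is fine since the lemma states $O(|\calH_+(p^*)|\log n)$.
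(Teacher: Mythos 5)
Your argument has a genuine gap at its very first step: you claim that because the cells containing the hulls are interior-disjoint triangles lying above the horizontal line $p^*$, their $x$-projections are ``almost'' disjoint intervals, so the hulls $C_1,\ldots,C_t$ admit a left-to-right order. This is false. The cells of $\Psi_+(p^*)\cup\Psi_{k+1}(p^*)$ come from different levels of the hierarchical cutting, and two disjoint triangles above $p^*$ can be stacked vertically, with completely overlapping $x$-projections (e.g., a large cell of $\Xi_1$ entirely above $p^*$ sitting over a small cell of $\Xi_{k+1}$ crossed by $p^*$). Consequently there is no left-to-right order of the hulls, a single hull may contribute several maximal arcs to $H_+(p^*)$ interleaved with arcs of other hulls, and your balanced binary merge --- which assumes each merge is a single bridge between two $x$-separated lower hulls --- does not apply. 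The hedge about ``$x$-overlap caused by shared triangle boundaries'' does not cover this, since the overlap can be total. Moreover, even replacing the bridge step by tangent-finding for disjoint (but $x$-overlapping) convex polygons does not rescue the recursion: the intermediate hulls produced by your divide-and-conquer are hulls of unions of several original hulls, and such a union hull can intersect the remaining hulls, so pairwise disjointness is not preserved across merge levels.

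The paper's proof supplies exactly the missing idea. It first computes the lower envelope $\calL(\calH_+(p^*))$ of the $t$ lower hulls: because the hulls are pairwise disjoint (Lemma~\ref{lem:30}), their representative segments (leftmost-to-rightmost vertex of each hull) are pairwise disjoint, the lower envelope of these $t$ disjoint segments has at most $2t-1$ pieces and is computable by a plane sweep in $O(t\log t)$ time, and each envelope piece of a segment lifts to the corresponding visible arc of its hull by an $O(\log n)$-time split of that hull's tree. Only after this step does one have at most $2t-1$ pieces that are genuinely separated by vertical lines, which are then merged left to right by $O(\log n)$-time common tangents, giving the claimed $O(t\log n)$ bound. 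Your proposal is missing this envelope/visibility stage, and without it the time bound and the correctness of the merge both fail.
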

\begin{proof}
%We first apply Lemma~\ref{lem:20} to compute the convex hulls of $\calH_+(p^*)$ for all $p^*\in P^*$. In what follows, we describe our algorithm for computing $H_+(p^*)$ for each $p^*\in P^*$.
Without loss of generality, we assume that $p^*$ is horizontal. Let $t=|\calH_+(p^*)|$. Note that the size of each convex hull of $\calH_+(p^*)$ is at most $n$. Also, since convex hulls of $\calH_+(p^*)$ are pairwise disjoint by Lemma~\ref{lem:30}, it holds that $t\leq n$.

Because we are to compute the lower hull of the convex hulls of $\calH_+(p^*)$, it suffices to only consider the lower hull of each convex hull of $\calH_+(p^*)$. Note that since binary search trees for all convex hulls of $\calH_+(p^*)$ are available, we can obtain binary search trees representing their lower hulls in $O(t\log n)$ time by first finding the leftmost and rightmost vertices of the convex hulls and then performing split/merge operations on the trees.

The first step is to compute the portions of each lower hull $H$ of $\calH_+(p^*)$ that is vertically visible to $p^*$ (we say that a point $q\in H$ is {\em vertically visible} to $p^*$ if the vertical segment connecting $q$ to $p^*$ does not cross any other lower hull of $\calH_+(p^*)$). In fact, the visible portions constitute exactly the lower envelope of the lower hulls of $\calH_+(p^*)$, denoted by $\calL(\calH_+(p^*))$ (e.g., see Fig.~\ref{fig:lowenvelope}).
%e.g., in Fig.~\ref{fig:lowenvelope}, the red thick edges on each convex hull are vertically visible to $p^*$). Computing these visible portions for all lower hulls of $\calH_+(p^*)$ can be done in $O(t\log n)$ time, as follows.
Below we describe an algorithm to compute $\calL(\calH_+(p^*))$ in $O(t\log n)$ time.

\begin{figure}[t]
\begin{minipage}[t]{\textwidth}
\begin{center}
\includegraphics[height=1.3in]{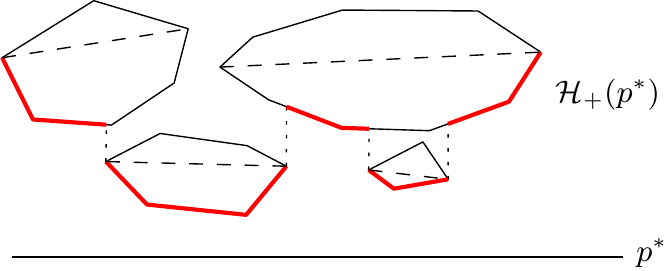}
\caption{\footnotesize Illustrating the lower envelope $\calL(\calH_+(p^*))$ (the thick red edges) of the convex hulls of $\calH_+(p^*)$. The dashed segment inside each convex hull is the representative segment.}
\label{fig:lowenvelope}
\end{center}
\end{minipage}
\vspace{-0.15in}
\end{figure}

 As convex hulls of $\calH_+(p^*)$ are pairwise disjoint by Lemma~\ref{lem:30}, the number of (maximal) visible portions of all lower hulls of $\calH_+(p^*)$ is at most $2t-1$.
For each convex hull $H$ of $\calH_+(p^*)$, consider the segment connecting the leftmost and rightmost endpoints of $H$, and call it the {\em representative segment} of $H$ (e.g., see Fig.~\ref{fig:lowenvelope}). Let $Q$ be the set of representative segments of all convex hulls of $\calH_+(p^*)$. Because convex hulls of $\calH_+(p^*)$ are pairwise disjoint,
an easy but crucial observation is that segments of $Q$ are pairwise disjoint and the lower envelope $\calL(Q)$ of the segments of $Q$ corresponds to $\calL(\calH_+(p^*))$ in the following sense: if $\overline{ab}$ is a maximal segment of $\calL(Q)$ that lies on a representative segment of a convex hull $H$ of $\calH_+(p^*)$, then the vertical projection of $\overline{ab}$ onto the lower hull of $H$ is a maximal portion of the lower hull of $H$ on $\calL(\calH_+(p^*))$, and that portion can be obtained in $O(\log n)$ time by splitting the binary search tree for the lower hull of $H$ at the $x$-coordinates of $a$ and $b$, respectively. As such, once $\calL(Q)$ is computed, $\calL(\calH_+(p^*))$ in which each maximal portion is represented by a binary search tree can be obtained in additional $O(t\log n)$ time. As $|Q|=t$ and segments of $Q$ are pairwise disjoint, $\calL(Q)$ can be constructed in $O(t\log t)$ time by an easy plane sweeping algorithm.
Hence, $\calL(\calH_+(p^*))$ can be computed in $O(t\log n)$ time in total.

%We sweep a vertical line $\ell$ from left to right. During the sweeping, we implicitly maintain the lower envelope, by maintaining the ``name'' of the lower hull of $\calH_+(p^*)$ that are currently appear at $\ell$; let $a$ denote the name of that convex hull. Initially, we set $a$ to the lower hull whose left endpoint is the leftmost among all lower hulls of $\calH_+(p^*)$. We also sort all left and right endpoints of all hulls of $\calH_+(p^*)$ in $O(t\log n)$ time. During the sweeping, we use a balanced binary search tree $T$ to maintain all lower hulls of $\calH_+(p^*)$ that are currently intersecting $\ell$ following the order the their intersections with $\ell$. When $\ell$ encounters a left endpoint $q$ of a lower hull $b$, we check whether $q$ is below the intersection $q'$ of $\ell$ and $a$. To this end, we first compute $q'$, which can be done in $O(t\log n)$ time as we have the binary search tree representing $b$. If $q$ is above $q'$, then we proceed the sweeping; otherwise, we reset $a$ to $b$.

With $\calL(\calH_+(p^*))$ in hand, we can now compute the lower hull $H_+(p^*)$ in additional $O(t\log n)$ time, as follows. As discussed above, $\calL(\calH_+(p^*))$ consists of at most $2t-1$ {\em pieces} sorted from left to right, each of which is a portion of a lower hull of $\calH_+(p^*)$ and is represented by a binary search tree. We merge the first two pieces by computing their common tangent, which can be done in $O(\log n)$ time~\cite{ref:OvermarsMa81} as the two pieces are separated by a vertical line. After the merge, we obtain a binary search tree that represents the lower hull of the first two pieces of $\calL(\calH_+(p^*))$. Next, we merge this lower hull with the third piece of $\calL(\calH_+(p^*))$ in the same way. We repeat this process until all pieces of $\calL(\calH_+(p^*))$ are merged, after which a binary search tree representing $H_+(p^*)$ is obtained. The runtime is bounded by $O(t\log n)$ as each merge takes $O(\log n)$ time and $\calL(\calH_+(p^*))$ has at most $2t-1$ pieces.

In summary, once the convex hulls of $\calH_+(p^*)$ are available, we can compute the lower hull $H_+(p^*)$ in $O(|\calH_+(p^*)|\log n)$ time.
\end{proof}

Applying Lemma~\ref{lem:40} to all lines of $P^*$ will compute the lower hulls $H_+(p^*)$ for all $p^*\in P^*$. One issue is that after we apply the algorithm for one line $p^*\in P^*$, convex hulls of $\calH_+(p^*)$ may have been destroyed due to the split and merge operations during the algorithm. The destroyed convex hulls may be used later when we apply the algorithm for other lines of $P^*$. The remedy is to use fully persistent binary search trees with path-copying~\cite{ref:DriscollMa89,ref:SarnakPl86} to represent convex hulls so that standard operations on the trees (e.g., merge, split) can be performed in $O(\log n)$ time each and after each operation the original trees are still kept intact (so that future operations can still be performed on the original trees as usual). In this way, whenever we apply the algorithm for a line of $P^*$, we always have the original trees representing the convex hulls available, and thus the runtime of the algorithm in Lemma~\ref{lem:40} is not affected (although $O(\log n)$ extra space will be incurred after each operation on the trees).

For the time analysis, by Lemma~\ref{lem:20}, computing convex hulls of $\calH_+(p^*)$ for all lines $p^*\in P^*$ takes $O(mr+mn/r)$ time. Then, applying Lemma~\ref{lem:40} to all lines of $P^*$ takes $O(\sum_{p^*\in P^*}|\calH_+(p^*)|\cdot \log n)$ time in total, which is bounded by $O(mr\log n)$ due to the following lemma.

\begin{lemma}
$\sum_{p^*\in P^*}|\calH_+(p^*)|=O(mr)$.
\end{lemma}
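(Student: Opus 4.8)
The plan is to unfold the definition of $\calH_+(p^*)$: it consists of one convex hull for every cell of $\Psi_+(p^*)$ and one for every cell of $\Psi_{k+1}(p^*)$, so $|\calH_+(p^*)| = |\Psi_+(p^*)| + |\Psi_{k+1}(p^*)|$, and it suffices to show $\sum_{p^*\in P^*}|\Psi_+(p^*)| = O(mr)$ and $\sum_{p^*\in P^*}|\Psi_{k+1}(p^*)| = O(mr)$. The second bound has already been established inside the proof of Lemma~\ref{lem:20}, so the real work is the bound on $\sum_{p^*}|\Psi_+(p^*)|$.

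For $\Psi_+(p^*)$ I would split its cells according to which level of the cutting they live in: those in $\Xi_1\cup\cdots\cup\Xi_k$ and those in $\Xi_{k+1}$. Every cell of $\Psi_+(p^*)$ is a child of some cell of $\Psi(p^*)$; for a parent cell $\sigma'\in\Xi_i$ with $i\le k-1$ the number of children is at most the refinement constant $c$ of the hierarchical cutting, so $|\Psi_+(p^*)\cap(\Xi_1\cup\cdots\cup\Xi_k)| = O(|\Psi(p^*)|)$. Summing over all $p^*$ and using $\sum_{p^*\in P^*}|\Psi(p^*)| = \sum_{i=0}^{k}\sum_{\sigma\in\Xi_i}|P^*_{\sigma}| = O(mr)$ (already cited in the proof of Lemma~\ref{lem:20}) bounds this portion by $O(mr)$.

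The delicate part is the cells of $\Psi_+(p^*)$ in $\Xi_{k+1}$, since a parent cell $\sigma'\in\Xi_k$ may have $\Theta(|S^*(\sigma')|\,r^2/n)$ children rather than $O(1)$; here the per-cell charging must go the other way. I would charge each cell $\sigma\in\Xi_{k+1}$ that belongs to some $\Psi_+(p^*)$ to the number of lines $p^*$ for which this can happen: $\sigma$ can be in $\Psi_+(p^*)$ only if $p^*$ crosses the parent $\sigma'\in\Xi_k$ of $\sigma$, and since $\Xi_k$ is a $(1/\rho^k)$-cutting with $\rho^k\ge r$ we have $|P^*_{\sigma'}|\le m/\rho^k = O(m/r)$. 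Therefore $\sum_{p^*\in P^*}|\Psi_+(p^*)\cap\Xi_{k+1}| \le \sum_{\sigma\in\Xi_{k+1}}|P^*_{\sigma'}| \le |\Xi_{k+1}|\cdot O(m/r) = O(r^2)\cdot O(m/r) = O(mr)$, using $|\Xi_{k+1}| = O(r^2)$. Adding the three contributions yields $\sum_{p^*\in P^*}|\calH_+(p^*)| = O(mr)$. The main obstacle is precisely this last charging step — getting the direction of the counting right so that the $O(m/r)$ crossing bound for cells of $\Xi_k$ combines with the $O(r^2)$ size bound of $\Xi_{k+1}$; it mirrors the running-time argument already made in the proof of Lemma~\ref{lem:20}, so no new idea is needed, only care in applying the cutting's size and crossing properties at level $k$.
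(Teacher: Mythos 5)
Your proof is correct and follows essentially the same route as the paper's: split $|\calH_+(p^*)|$ into $|\Psi_+(p^*)|+|\Psi_{k+1}(p^*)|$, charge each $\Xi_{k+1}$ cell to the $O(m/r)$ lines crossing its parent in $\Xi_k$ and multiply by the $O(r^2)$ cell count, and bound the cells at levels $1,\ldots,k$ through the $O(1)$ children of crossed parent cells. The only cosmetic difference is that you invoke the aggregate bound $\sum_{p^*}|\Psi(p^*)|=\sum_{i}\sum_{\sigma\in\Xi_i}|P^*_\sigma|=O(mr)$ directly, whereas the paper redoes the per-level geometric sum $\sum_i m\rho^{i+1}=O(mr)$; these are the same hierarchical-cutting fact.
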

\begin{proof}
The lemma actually has been implied by the time analysis of Lemma~\ref{lem:20}. We provide a direct proof here. Notice that $|\calH_+(p^*)|=|\Psi_+(p^*)|+|\Psi_{k+1}(p^*)|$. Below we will bound both $\sum_{p^*\in P^*}|\Psi_+(p^*)|$ and $\sum_{p^*\in P^*}|\Psi_{k+1}(p^*)|$.

Recall that $\Psi_{k+1}(p^*)\subseteq \Xi_{k+1}$. Consider a cell $\sigma\in \Xi_{k+1}$. If $\sigma\in \Psi_{k+1}(p^*)$, then $\sigma'$ must be crossed by the line $p^*$, where $\sigma'$ is the parent cell of $\sigma$ in $\Xi_k$. Hence, the number of lines $p^*\in P^*$ such that $\sigma$ is in $\Psi_{k+1}(p^*)$ is no more than the number lines of $P^*$ crossing $\sigma'$, which is $O(m/r)$. As $\Xi_{k+1}$ has $O(r^2)$ cells, $\sum_{p^*\in P^*}|\Psi_{k+1}(p^*)|$ is bounded by $O(r^2\cdot m/r)$, which is $O(mr)$.

To bound $\sum_{p^*\in P^*}|\Psi_+(p^*)|$, we partition $\Psi_+(p^*)$ into two subsets, $\Psi^1_+(p^*)$, which consists of those cells of $\Psi_+(p^*)$ that are in $\Xi_{k+1}$, and $\Psi^2_+(p^*)=\Psi_+(p^*)\setminus \Psi^1_+(p^*)$.

\begin{itemize}
  \item For each cell $\sigma\in \Psi^1_+(p^*)$, by definition, $\sigma'$ must be crossed by the line $p^*$, where $\sigma'$ is the parent cell of $\sigma$ in $\Xi_k$. Hence, the number of lines $p^*\in P^*$ such that $\sigma$ is in $\Psi^1_+(p^*)$ is no more than the number lines of $P^*$ crossing $\sigma'$, which is $O(m/r)$. Following the same analysis as above, we can derive that $\sum_{p^*\in P^*}|\Psi^1_+(p^*)|=O(mr)$.
  \item
  For each cell $\sigma\in \Psi^2_+(p^*)$, $\sigma$ is in $\Xi_i$ for some $i\in [1,k]$. By definition, $\sigma'$ must be crossed by the line $p^*$, where $\sigma'$ is the parent cell of $\sigma$ in $\Xi_{i-1}$. Hence, the number of lines $p^*\in P^*$ such that $\sigma$ is in $\Psi^1_+(p^*)$ is no more than the number lines of $P^*$ crossing $\sigma'$, which is $O(m/\rho^{i-1})$, where $\rho$ is the constant associated with the hierarchical cutting as explained before. As $\Xi_i$ has $O(\rho^{2i})$ cells, $\sum_{p^*\in P^*}|\Psi^1_+(p^*)|$ is big-O of $\sum_{1\leq i\leq k}m/\rho^{i-1}\cdot \rho^{2i}=\sum_{1\leq i\leq k}m\rho^{i+1}$, which is $O(mr)$ as $k=\Theta(\log_{\rho}r)$ and $\rho$ is a constant. As such, $\sum_{p^*\in P^*}|\Psi^1_+(p^*)|=O(mr)$.
\end{itemize}

Therefore, we obtain that $\sum_{p^*\in P^*}|\Psi_+(p^*)|=O(mr)$.
%and thus $\sum_{p^*\in P^*}|\calH_+(p^*)|=O(mr)$.
\end{proof}

In summary, computing lower hulls $H_+(p^*)$ for all $p^*\in P^*$ can be done in a total of $O(n\log n+n\log r+mr\log n+mn/r)$ time. Analogously, we can also compute the upper hulls $H_-(p^*)$ for all $p^*\in P^*$. Then, for each line $p^*\in P^*$, we compute the two inner common tangents of $H_+(p^*)$ and $H_-(p^*)$, which can be done in $O(\log n)$ time~\cite{ref:GuibasCo91}. With the two inner common tangents as well as the two hulls $H_+(p^*)$ and $H_-(p^*)$, the dual face $F_p^*(S)$, or the face $F_p(S)$ in the primal plane, can be implicitly determined. More precisely, given $H_+(p^*)$ and $H_-(p^*)$, we can obtain a binary search tree representing $F_p(S)$ in $O(\log n)$ time. The tree can be used to support standard binary search on $F_p(S)$, which is a convex polygon. Outputting $F_p(S)$ explicitly takes $O(|F_p(S)|)$ additional time.

To avoid reporting a face more than once, we can remove duplication in the following way. Due to the general position assumption, an easy observation is that $F_{p_1}(S)=F_{p_2}(S)$ for two points $p_1$ and $p_2$ of $P$ if and only if the leftmost vertex of $F_{p_1}(S)$ is the same as that of $F_{p_2}(S)$. Also note that the leftmost and rightmost vertices of $F_p(S)$ are dual to the two inner common tangents of $H_+(p^*)$ and $H_-(p^*)$, respectively. Hence, for any two points $p_1$ and $p_2$ of $P$, we can determine whether they are from the same face of $\calA(S)$ by comparing the corresponding inner common tangents. In this way, the duplication can be removed in $O(m\log m)$ time, which is $O(m\log n)$ as $m<n^2/2$. After that, we can report all distinct faces. Note that outputting all distinct faces explicitly takes $O(n+m\sqrt{n})$ time as the total combinatorial complexity of $m$ distinct faces in $\calA(S)$ is bounded by $O(n+m\sqrt{n})$~\cite{ref:EdelsbrunnerOn86}.

To recapitulate, computing the distinct faces $F_p(S)$ implicitly for all $p\in P$ takes $O(n\log n+n\log r+mr\log n+mn/r)$ time and reporting them explicitly takes additional $O(n+m\sqrt{n})$ time. Setting $r=\min\{m,\sqrt{n/\log n}\}$ leads to the total time bounded by $O(n\log n+m\sqrt{n\log n})$.

\begin{theorem}\label{theo:10}
Given a set $S$ of $n$ lines and a set $P$ of $m$ points in the plane,
the faces of the arrangement of the lines containing at least one point of $P$ can be computed in $O(n\log n+m\sqrt{n\log n})$ time.
\end{theorem}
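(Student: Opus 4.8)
The plan is to assemble the components built in Lemmas~\ref{lem:10}--\ref{lem:40} together with the unnumbered summation lemma, and then to choose the cutting parameter $r$ optimally. First I would recall the dual reformulation from the start of the section: for each $p\in P$, the dual face $F^*_p(S)$ is exactly the portion of the lower hull $H_+(p^*)$ of $S^*_+(p^*)$ and the upper hull $H_-(p^*)$ of $S^*_-(p^*)$ lying between their two inner common tangents. Hence it suffices to compute, for every $p^*\in P^*$, balanced binary search trees representing $H_+(p^*)$ and $H_-(p^*)$; the inner common tangents, and thus a tree representing $F_p(S)$, then follow in $O(\log n)$ additional time each by~\cite{ref:GuibasCo91}.

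Next I would account for the cost piece by piece. The preprocessing step (building the hierarchical $(1/r)$-cutting of $P^*$, the refined cells $\Xi_{k+1}$, distributing $S^*$ among cells, presorting $S^*$, and precomputing the convex hull of $S^*(\sigma)$ for every cell $\sigma$) takes $O(mr+n\log r+n\log n)$ time. For each $p^*\in P^*$, Lemma~\ref{lem:10} decomposes $S^*_+(p^*)$ into the cells of $\Psi_+(p^*)$, whose hulls are precomputed, and the truncated cells of $\Psi_{k+1}(p^*)$, each holding at most $n/r^2$ points, whose hulls are computed on the fly; Lemma~\ref{lem:20} bounds the total cost of producing all hulls in $\calH_+(p^*)$, over all $p^*$, by $O(mr+mn/r)$. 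Since the hulls of $\calH_+(p^*)$ are pairwise disjoint (Lemma~\ref{lem:30}), Lemma~\ref{lem:40} computes $H_+(p^*)$ in $O(|\calH_+(p^*)|\log n)$ time, which sums to $O(mr\log n)$ over all $p^*$ by the summation lemma $\sum_{p^*}|\calH_+(p^*)|=O(mr)$. The symmetric construction yields all $H_-(p^*)$ within the same bounds. One point requiring care: the split/merge operations used inside Lemma~\ref{lem:40} destroy the precomputed per-cell hulls, which are needed again for later points of $P^*$; I would resolve this by storing the hulls in fully persistent path-copying balanced search trees~\cite{ref:DriscollMa89,ref:SarnakPl86}, so that every operation still costs $O(\log n)$ and the original trees survive — this affects only the space, not the time.

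Finally, I would handle output and deduplication. For each $p^*$, compute the two inner common tangents of $H_+(p^*)$ and $H_-(p^*)$ in $O(\log n)$ time, obtaining a tree for $F_p(S)$. Under the general position assumption, $F_{p_1}(S)=F_{p_2}(S)$ iff their leftmost vertices coincide, and these vertices are dual to one of the inner common tangents; comparing tangents removes duplicates in $O(m\log m)=O(m\log n)$ time, after which the distinct faces are reported explicitly in $O(n+m\sqrt n)$ time using the bound of~\cite{ref:EdelsbrunnerOn86} on the total complexity of $m$ distinct cells. Adding up, the running time is $O(n\log n+n\log r+mr\log n+mn/r+n+m\sqrt n)$. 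Balancing the two terms $mr\log n$ and $mn/r$ gives $r=\sqrt{n/\log n}$, at which both equal $m\sqrt{n\log n}$; taking $r=\min\{m,\sqrt{n/\log n}\}$ (so that the cutting of $P^*$ is well-defined) and checking that in the regime $m>\sqrt{n/\log n}$ the choice $r=m$ yields a bound subsumed by $O(n\log n)$, the total becomes $O(n\log n+m\sqrt{n\log n})$.

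There is no single hard step here: the structural work has already been done in the lemmas, and the proof is essentially careful bookkeeping. The two most delicate points are (i) ensuring that the persistence trick keeps the time bound intact, and (ii) verifying that $mn/r$ (from Lemma~\ref{lem:20}) and $mr\log n$ (from Lemma~\ref{lem:40}) are the terms that govern the optimization of $r$, with the boundary case $r=m$ handled separately.
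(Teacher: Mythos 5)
Your proposal is correct and follows essentially the same route as the paper: it assembles Lemmas~\ref{lem:10}--\ref{lem:40} and the summation bound $\sum_{p^*}|\calH_+(p^*)|=O(mr)$, uses fully persistent path-copying trees to preserve the per-cell hulls, deduplicates via the inner common tangents, and sets $r=\min\{m,\sqrt{n/\log n}\}$ exactly as the paper does. One tiny slip in wording: with this choice of $r$, the case $r=m$ occurs in the regime $m\le\sqrt{n/\log n}$ (not $m>\sqrt{n/\log n}$), and it is in that small-$m$ regime that the terms $mr\log n$ and $mn/r$ are subsumed by $O(n\log n)$.
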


\paragraph{Remark.}
The algorithm runs in $O(n\log n)$ for $m=O(\sqrt{n\log n})$, which matches the $\Omega(n\log n)$ lower bound for computing a single face. For comparison, Agarwal~\cite{ref:AgarwalPa902} gave an algorithm of $O(n\log n+m\sqrt{n}\log^2 n)$ time, and Agarwal, Matou\v{s}ek, and Schwarzkopf~\cite{ref:AgarwalCo98} presented a randomized algorithm of $O(n\log n+m\sqrt{n}\log n)$ expected time.

%the randomized algorithm of~\cite{ref:AgarwalCo98} runs in $O(n\log n)$ expected time for $m=O(\sqrt{n})$.

\subsection{The second algorithm}
\label{sec:second}

We now present our main algorithm, which
%which uses the above first algorithm as a subroutine. The algorithm
follows the scheme of Agarwal~\cite{ref:AgarwalPa902}, but replaces a key subroutine by Theorem~\ref{theo:10}.

We first compute a $(1/r)$-cutting $\Xi$ for the lines of $S$ in
$O(nr)$ time~\cite{ref:ChazelleCu93}, with the parameter $r$ to be
determined later. We then locate the cell of $\Xi$ containing each point of
$P$; this can be done in $O(m\log r)$ time for all points of
$P$~\cite{ref:ChazelleCu93}. Consider a cell $\sigma$ of $\Xi$. Recall
that $\sigma$ is a triangle. Let $P(\sigma)=P\cap \sigma$. Let
$S_{\sigma}$ be the subset of lines of $S$ crossing $\sigma$.
%and $\calA(S_{\sigma})$ the arrangement of the lines of $S_{\sigma}$. For
%each point $p\in P(\sigma)$, let $F_p(S)$ (resp., $F_{\sigma}(p)$) refer
%to the face of $\calA(S)$ (resp., $\calA(S_{\sigma})$) containing $p$.
Consider a point $p\in P(\sigma)$.
Recall the definition in Section~\ref{sec:pre} that $F_p(S_{\sigma})$ denotes the face of the arrangement  $\calA(S_{\sigma})$ that contains $p$.
Observe that the face $F_p(S)$ is $F_{p}(S_{\sigma})$ if and only $F_{p}(S_{\sigma})$ does
not intersect the boundary of $\sigma$. The {\em zone} of $\sigma$ in
$\calA(S_{\sigma})$ is defined as the collection of face portions
$F\cap \sigma$ for all faces $F\in \calA(S_{\sigma})$ that intersect
the boundary of $\sigma$. If $F_p(S)\neq F_{p}(S_{\sigma})$, then $F_p(S)$ is
divided into multiple portions each of which is a face in the zone of
some cell of $\Xi$ (and $F_{p}(S_{\sigma})$ is one of these portions).
Hence, to find all nonempty faces of $\calA(S)$, it
suffices to compute, for every cell $\sigma\in \Xi$, the faces of
$\calA(S_{\sigma})$ containing the points of $P(\sigma)$ and the zone
of $\sigma$. The nonempty faces of $\calA(S)$ that are split among the
zones can be obtained by merging the zones along the edges of cells
of $\Xi$.

To compute the faces of $\calA(S_{\sigma})$ containing the points of $P(\sigma)$, we apply Theorem~\ref{theo:10}, which takes $O(n_{\sigma}\log n_{\sigma}+m_{\sigma}\sqrt{n_{\sigma}\log n_{\sigma}})$ time, with $n_{\sigma}=|S_{\sigma}|$ and $m_{\sigma}=|P_{\sigma}|$.
Computing the zone for $\sigma$ can be done in $O(n_{\sigma}\log n_{\sigma})$ time, e.g., by the algorithm of~\cite{ref:AlevizosAn90} or a recent simple algorithm~\cite{ref:WangA22}. Since $n_{\sigma}=O(n/r)$, $\sum_{\sigma\in \Xi}m_{\sigma}=m$, and $\Xi$ has $O(r^2)$ cells, the total time for solving the subproblem for all cells of $\Xi$ is
\begin{equation*}
\begin{split}
   O\left(\sum_{\sigma\in \Xi}n_{\sigma}\log n_{\sigma}+m_{\sigma}\sqrt{n_{\sigma}\log n_{\sigma}}\right) &= O\left( r^2\cdot (n/r)\cdot \log (n/r)+\sqrt{n/r\cdot \log(n/r)}\cdot \sum_{\sigma\in \Xi}m_{\sigma}\right)\\
     & =O\left(nr\log (n/r)+m\sqrt{n/r\cdot \log(n/r)}\right).
\end{split}
\end{equation*}

After all cells of $\Xi$ are processed as above, we merge the zones of all cells, which can be done in time linear in the total size of the zones of all cells of $\Xi$ because zones of different cells are disjoint. The total size of all zones is $O(nr)$ as the size of the zone for each cell is $O(n/r)$ and $\Xi$ has $O(r^2)$ cells.

In summary, the total time of the algorithm is $O(m\log r+nr\log
(n/r)+m\sqrt{n/r\cdot \log(n/r)})$. By setting
$r=\max\{m^{2/3}/(n^{1/3}\cdot \log^{1/3}(n/\sqrt{m})),1\}$, we obtain that the total time is bounded by $O((nm\log(n/\sqrt{m}))^{2/3}+(n+m)\log n)$. Indeed, since
$m<n^2/2$, $m^{2/3}/(n^{1/3}\cdot \log^{1/3}(n/\sqrt{m}))<n$ and $\log m=O(\log n)$. If $m^{2/3}/(n^{1/3}\cdot \log^{1/3}(n/\sqrt{m}))<1$, then $r=1$ and $m<\sqrt{n\log n}$, and thus $m\log r+nr\log
(n/r)+m\sqrt{n/r\cdot \log(n/r)}=m+n\log n+m\sqrt{n\log n}=O(n\log n)$; otherwise, $r=m^{2/3}/(n^{1/3}\cdot \log^{1/3}(n/\sqrt{m}))$ and $m\log r+nr\log
(n/r)+m\sqrt{n/r\cdot \log(n/r)}=O((nm\log(n/\sqrt{m}))^{2/3}+m\log n)$. Combining with the $O(m\log n)$ time algorithm for the case $m\geq n^2/2$, we obtain the following result.

\begin{theorem}\label{theo:20}
Given a set $S$ of $n$ lines and a set $P$ of $m$ points in the plane,
the faces of the arrangement of the lines containing at least one point of $P$ can be computed in %$O(n^{2/3}m^{2/3}\log^{2/3}(n/\sqrt{m})+(n+m)\log n)$ time.
$O(n^{2/3}m^{2/3}\log^{2/3}\frac{n}{\sqrt{m}}+(n+m)\log n)$ time.
\end{theorem}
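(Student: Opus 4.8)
The plan is to prove Theorem~\ref{theo:20} by the cutting-based divide-and-conquer scheme of Agarwal~\cite{ref:AgarwalPa902}, with Theorem~\ref{theo:10} plugged in to handle the subproblems. First I would dispose of the dense regime: if $m\ge n^2/2$, then building $\calA(S)$ explicitly and point-locating $P$ in it runs in $O(m\log n)$ time, which is within the claimed bound, so from here on assume $m<n^2/2$. Then, via Chazelle's algorithm in $O(nr)$ time, I would compute a $(1/r)$-cutting $\Xi$ of $S$ together with the conflict lists $S_\sigma$, for a parameter $r$ to be fixed at the end; $\Xi$ has $O(r^2)$ triangular cells, each crossed by $n_\sigma=O(n/r)$ lines. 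In $O(m\log r)$ time I would locate the cell of $\Xi$ containing each point of $P$, producing the sets $P(\sigma)=P\cap\sigma$ with $\sum_\sigma|P(\sigma)|=m$.

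The structural heart of the argument is the reduction to per-cell subproblems. For a point $p\in P(\sigma)$, the face $F_p(S)$ equals $F_p(S_\sigma)$ precisely when $F_p(S_\sigma)$ avoids $\partial\sigma$; otherwise $F_p(S)$ is chopped by the edges of $\Xi$ into pieces, each of which is a face appearing in the \emph{zone} of some cell $\sigma'$ of $\Xi$ (the zone being the collection of portions $F\cap\sigma'$ over faces $F$ of $\calA(S_{\sigma'})$ that meet $\partial\sigma'$). Consequently, if for every cell $\sigma$ we compute both the faces of $\calA(S_\sigma)$ through $P(\sigma)$ and the zone of $\sigma$, and then stitch together zone pieces sharing an edge of $\Xi$, we recover exactly the nonempty faces of $\calA(S)$; since zones of distinct cells have disjoint interiors, the stitching is linear in their total size, which is $O(r^2\cdot n/r)=O(nr)$. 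Reporting each face only once is handled as in the proof of Theorem~\ref{theo:10}, by comparing leftmost vertices.

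For each $\sigma$ I would invoke Theorem~\ref{theo:10} on $(S_\sigma,P(\sigma))$, at cost $O(n_\sigma\log n_\sigma+m_\sigma\sqrt{n_\sigma\log n_\sigma})$, and compute the zone of $\sigma$ in $O(n_\sigma\log n_\sigma)$ time by a known zone algorithm~\cite{ref:AlevizosAn90,ref:WangA22}. Summing over the $O(r^2)$ cells, using $n_\sigma=O(n/r)$ and $\sum_\sigma m_\sigma=m$, the total running time is
\[
O\!\left(m\log r + nr\log\tfrac{n}{r} + m\sqrt{\tfrac{n}{r}\log\tfrac{n}{r}}\right).
\]
The final step is the parameter choice $r=\max\{\,m^{2/3}/(n^{1/3}\log^{1/3}(n/\sqrt m)),\ 1\,\}$, which balances the second and third terms; using $m<n^2/2$ (so the fraction is $<n$ and $\log m=O(\log n)$), one checks that when the fraction is below $1$ we have $r=1$, $m<\sqrt{n\log n}$, and every term is $O(n\log n)$, while otherwise the expression evaluates to $O((nm\log(n/\sqrt m))^{2/3}+m\log n)$. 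Combined with the dense case, this yields the stated $O(n^{2/3}m^{2/3}\log^{2/3}(n/\sqrt m)+(n+m)\log n)$.

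I expect the only delicate points to be (i) verifying that the zone-merging step truly reassembles every split face with no gaps and no duplicates — which rests on the fact that a face of $\calA(S)$ crossing a cell boundary decomposes cleanly along the edges of $\Xi$ into zone faces — and (ii) the arithmetic confirming that the chosen $r$ does not smuggle in an extra logarithmic factor. The analytic work of solving a subproblem has already been absorbed into Theorem~\ref{theo:10}, so beyond these checks the proof is essentially a summation over the cells of $\Xi$.
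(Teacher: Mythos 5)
Your proposal is correct and follows essentially the same route as the paper's proof: Agarwal's cutting-based scheme with Theorem~\ref{theo:10} as the per-cell subroutine, an $O(n_\sigma\log n_\sigma)$ zone computation per cell with zone merging along the cutting edges, the identical running-time summation $O(m\log r+nr\log(n/r)+m\sqrt{(n/r)\log(n/r)})$, and the same choice $r=\max\{m^{2/3}/(n^{1/3}\log^{1/3}(n/\sqrt{m})),1\}$ together with the $m\ge n^2/2$ fallback.
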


\section{Computing many cells in arrangements of segments}
\label{sec:segment}

In this section, we consider the segment case for computing many faces. Let $S$ be a set of $n$ line segments and $P$ be a set of $n$ points in the plane. The problem is to compute all distinct non-empty faces of $\calA(S)$. Note that these faces will be output explicitly.
%Again, if a non-empty face contains more than one point of $P$, the face only needs to be output once.
For ease of exposition, we make a general position assumption that no segment of $S$ is vertical, no three segments of $S$ are concurrent, no two segments of $S$ share a common endpoint, and no point of $P$ lies on a segment of $S$. Degenerate cases can be handled by standard techniques~\cite{ref:EdelsbrunnerSi90}.

In the following, we first present our deterministic algorithm and then give the randomized result, which uses the deterministic algorithm as a subroutine.

\subsection{The deterministic algorithm}

If $m\geq n^2/2$, then the problem can be solved in $O(m\log n)$ time using the straightforward algorithm mentioned in Section~\ref{sec:pre} (i.e., first compute $\calA(S)$ and then find the non-empty cells using point locations). In what follows, we assume that $m<n^2/2$, and thus $\log m=O(\log n)$.
Let $E$ denote the set of the endpoints of all segments of $S$. Let $L$ denote the set of supporting lines of all segments of $S$.

Initially, we sort the points of $E$ (resp., $P$) by $x$-coordinate.
We compute a $(1/r)$-cutting $\Xi$ for $L$ in $O(nr)$ time~\cite{ref:ChazelleCu93}, for a sufficiently large constant $r$. We then locate the cell of $\Xi$ containing each point of
$P$; this can be done in $O(m\log r)$ time for all points of
$P$~\cite{ref:ChazelleCu93}. Consider a cell $\sigma$ of $\Xi$. Recall
that $\sigma$ is a triangle. Let $P(\sigma)=P\cap \sigma$ and $E(\sigma)=E\cap \sigma$. Let
$S_{\sigma}$ be the subset of segments of $S$ intersecting $\sigma$.
%and $\calA(S_{\sigma})$ the arrangement of the segments of $S_{\sigma}$.
Note that $|S_{\sigma}|=O(n/r)$ and $\Xi$ has $O(r^2)$ cells.
If $|E(\sigma)|>n/r^2$, then we triangulate $\sigma$ into
at most $2\lceil|E(\sigma)|\cdot r^2/n\rceil$ triangles each of which contains at
most $n/r^2$ points of $E$. This can be done by first sorting all points of
$E(\sigma)$ and then using a sweeping algorithm as described in
Section~\ref{sec:first}. Due to the presorting of $E$, the sorting of
$E(\sigma)$ for all cells $\sigma$ of $\Xi$ can be done in $O(n)$ time
and thus the triangulation takes $O(n)$ time in total for all cells of
$\Xi$. By slightly abusing notation, we still use $\Xi$ to denote the
set of all new triangles for all original cells (if an original cell was not triangulated, then we also include it in the new $\Xi$). The new $\Xi$ now has the following properties: each cell of $\Xi$ is intersected by
$O(n/r)$ segments of $S$, $\Xi$ has $O(r^2)$ cells, and each cell of
$\Xi$ contains at most $n/r^2$ points of $E$.

%For each cell
%$\sigma\in \Xi$, if $|P(\sigma)|>m/r^2$, then we further triangulate $\sigma$
%into at most $2\lceil|P(\sigma)|\cdot r^2/m\rceil$ triangles each of which contains at
%most $m/r^2$ points of $P$. This can be done by first sorting all points of
%$P(\sigma)$ and then using a sweeping algorithm as described in
%Section~\ref{sec:first}. Due to the presorting of $P$, the sorting of
%$P(\sigma)$ for all cells $\sigma$ of $\Xi$ can be done in $O(m)$ time
%and thus the triangulation takes $O(m)$ time in total for all cells of
%$\Xi$. By slightly abusing notation, we still use $\Xi$ to denote the
%set of all new triangles. The new $\Xi$ now has the following properties: each cell of $\Xi$ is intersected by $O(n/r)$ segments of $S$, $\Xi$ has $O(r^2)$ cells, each cell of
%$\Xi$ contains at most $n/r^2$ points of $E$, and each cell of $\Xi$ contains at
%most $m/r^2$ points of $P$.

For each cell
$\sigma\in \Xi$, if $|P(\sigma)|>m/r^2$, then we further triangulate $\sigma$
into at most $2\lceil|P(\sigma)|\cdot r^2/m\rceil$ triangles each of which contains at
most $m/r^2$ points of $P$. Due to the presorting of $P$, the triangulation can be done in $O(m)$ time in total for all cells of $\Xi$, in the same way as above for $E(\sigma)$.
%This can be done by first sorting all points of $P(\sigma)$ and then using a sweeping algorithm as described in Section~\ref{sec:first}. Due to the presorting of $P$, the sorting of $P(\sigma)$ for all cells $\sigma$ of $\Xi$ can be done in $O(m)$ time and thus the triangulation takes $O(m)$ time in total for all cells of $\Xi$.
By slightly abusing notation, we still use $\Xi$ to denote the
set of all new triangles. The new $\Xi$ now has the following properties: each cell of $\Xi$ is intersected by $O(n/r)$ segments of $S$, $\Xi$ has $O(r^2)$ cells, each cell of
$\Xi$ contains at most $n/r^2$ points of $E$, and each cell of $\Xi$ contains at
most $m/r^2$ points of $P$.

For each cell $\sigma\in \Xi$, we define $S_{\sigma}$, $E(\sigma)$, and $P(\sigma)$ in the same way as before.
We say that a segment of $S_{\sigma}$ is a {\em short segment} of $\sigma$ if it has an
endpoint in the interior of $\sigma$ and is a {\em long segment}
otherwise. Let $S_1(\sigma)$ denote the set of long segments of
$\sigma$ and $S_2(\sigma)$ the set of short segments of
$\sigma$. Since $S_1(\sigma)\subseteq S_{\sigma}$ and $|S_{\sigma}|=O(n/r)$, we have $|S_1(\sigma)|=O(n/r)$. Also note that $|S_2(\sigma)|\leq |E(\sigma)|$.
As $|E(\sigma)|\leq n/r^2$, it holds that $|S_2(\sigma)|\leq n/r^2$.

For each cell edge $e$ of $\Xi$, we define its {\em zone} as the set
of faces of $\calA(S)$ intersected by $e$, which can be computed in
$O(n\alpha^2(n)\log n)$ time~\cite{ref:AmatoCo95} (note that computing
the zone in an arrangement of segments can be reduced to computing a
single face and the size of the zone is $O(n\alpha(n))$~\cite{ref:EdelsbrunnerAr92}).
Consider a point $p\in P(\sigma)$ for any cell
$\sigma\in \Xi$. Recall the definition in Section~\ref{sec:pre} that $F_p(S_{\sigma})$ denotes the face of the arrangement $\calA(S_{\sigma})$ that contains $p$.
If $F_p(S_{\sigma})$ does not intersect any edge of
$\sigma$, then $F_p(S)$ is $F_p(S_{\sigma})$; otherwise, $F_p(S)$ is
a face of the zone of an edge of $\sigma$ (and that face contains
$p$).
%Therefore, for each point $p$ of $P$, $F_p(S)$ is either $F_p(S_{\sigma})$ or a face of the zone of an edge of $\sigma$, where $\sigma$ is the cell of $\Xi$ containing $p$.

In light of the above discussion, our algorithm works as follows.
We first compute the zones for all cell edges of $\Xi$ and explicitly
store them in a point location data structure~\cite{ref:EdelsbrunnerOp86,ref:KirkpatrickOp83}.
This takes $O(nr^2\alpha^2(n)\log n)$ time in total.
%Computing all zones takes
%$O(nr^2\alpha^2(n)\log n)$ time in total and building the point
%location data structures takes $O(nr^2\alpha(n))$ as the size of each
%zone is $O(n\alpha(n))$ and building a point location data structure
%takes linear time in its size.
Next, for each cell $\sigma\in \Xi$, for each point $p\in P(\sigma)$,
using the point location data structure, we determine in $O(\log n)$ time whether $p$ is in a face of the zone of any edge of
$\sigma$. If yes, we explicitly output the face, which is $F_p(S)$.
Otherwise, the face $F_p(S_{\sigma})$ is $F_p(S)$.
%Note that the time for point location queries for all points of $P$ is $O(n\log n)$.
Let $P'(\sigma)$
denote the subset of points $p$ of $P(\sigma)$ in the above second case (i.e., $F_p(S_{\sigma})$ is $F_p(S)$).
The remaining problem is to compute the faces of $\calA(S_{\sigma})$
containing at least one point of $P'(\sigma)$. To solve this subproblem,
observe that the face $F_p(S_{\sigma})$ is in the intersection of $F_p(S_1(\sigma))$ and $F_p(S_2(\sigma))$, which may contain multiple connected components. Hence, more precisely, $F_p(S_{\sigma})$ is the connected component of $F_p(S_1(\sigma))\cap F_p(S_2(\sigma))$ that contains $p$.
%the connected component of $F_p(S_1(\sigma))\cap F_p(S_2(\sigma))$ that contain $p$.
Let $L_1(\sigma)$ be the set of the supporting lines of all segments of
$S_1(\sigma)$. Because all segments of $S_1(\sigma)$ are long
segments, we have the following lemma.

\begin{lemma}
For any point $p\in P'(\sigma)$,
$F_p(S_{\sigma})$ is the connected component of
$F_p(L_1(\sigma))\cap F_p(S_2(\sigma))$ that contains $p$.
\end{lemma}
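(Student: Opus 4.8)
The statement asserts that restricting attention to the supporting lines $L_1(\sigma)$ of the long segments, rather than the long segments themselves, does not change the face through $p$ inside $\sigma$. The plan is to show that within the cell $\sigma$, every long segment of $S_1(\sigma)$ behaves exactly like its supporting line, so that $F_p(S_1(\sigma)) \cap \sigma = F_p(L_1(\sigma)) \cap \sigma$; combining this with the preceding lemma (which already says $F_p(S_\sigma)$ is the component of $F_p(S_1(\sigma)) \cap F_p(S_2(\sigma))$ containing $p$) will give the claim.

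First I would recall the defining property of a long segment: a segment $s \in S_1(\sigma)$ crosses $\sigma$ but has \emph{no} endpoint in the interior of $\sigma$. Hence $s \cap \sigma = \ell \cap \sigma$, where $\ell$ is the supporting line of $s$: the segment, when clipped to the triangle $\sigma$, coincides with its supporting line clipped to $\sigma$. Consequently $\calA(S_1(\sigma))$ and $\calA(L_1(\sigma))$ induce exactly the same subdivision when intersected with $\sigma$ — each long segment contributes the same chord of the triangle as its line does. Since $p \in P'(\sigma) \subseteq \sigma$, and since $F_p(S_\sigma) \subseteq \sigma$ (because $F_p(S_\sigma)$ does not meet $\partial\sigma$ for points of $P'(\sigma)$, by the case analysis preceding the lemma), the portion of the arrangement relevant to $F_p(S_\sigma)$ is entirely inside $\sigma$, where the two arrangements agree.

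Next I would make this precise: $F_p(L_1(\sigma)) \cap \sigma = F_p(S_1(\sigma)) \cap \sigma$, because a point $q \in \sigma$ is separated from $p$ by some line of $L_1(\sigma)$ if and only if it is separated from $p$ by the corresponding segment of $S_1(\sigma)$ (the separation, if it happens, happens along the common chord $\ell \cap \sigma = s \cap \sigma$, and both $p$ and $q$ lie in $\sigma$, so no issue arises from $q$ or the separating locus leaving $\sigma$). Then for any point $p \in P'(\sigma)$,
\[
F_p(S_1(\sigma)) \cap F_p(S_2(\sigma)) \cap \sigma = F_p(L_1(\sigma)) \cap F_p(S_2(\sigma)) \cap \sigma,
\]
and since $F_p(S_\sigma)$ — the connected component through $p$ of the left-hand intersection by the previous lemma — lies entirely in $\sigma$, it equals the connected component through $p$ of $F_p(L_1(\sigma)) \cap F_p(S_2(\sigma))$, which is what the lemma claims.

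The main obstacle is being careful about the boundary: a priori $F_p(L_1(\sigma))$ could extend outside $\sigma$ (lines are unbounded, and even short segments' faces can reach $\partial\sigma$), so one cannot simply say the two \emph{global} faces coincide. The fix is exactly the observation that $p \in P'(\sigma)$ guarantees $F_p(S_\sigma)$ stays strictly inside $\sigma$, so only the restriction to $\sigma$ matters, and there the long segments and their supporting lines are literally the same point sets. I would state this containment explicitly and lean on the case distinction already set up in the text before the lemma.
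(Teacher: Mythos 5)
Your overall plan---exploit that inside $\sigma$ each long segment coincides with its supporting line, and that $p\in P'(\sigma)$ forces $F_p(S_{\sigma})$ to stay in the interior of $\sigma$---is the right mechanism, and it is what the paper uses. But the step you make the proof rest on is false: the equality $F_p(L_1(\sigma))\cap\sigma=F_p(S_1(\sigma))\cap\sigma$, and the claim that $q\in\sigma$ is separated from $p$ by a line of $L_1(\sigma)$ if and only if it is separated by the corresponding segment, do not hold. Membership in a face of a segment arrangement is a connectivity notion, not a sidedness notion: a single segment does not disconnect the plane, so two points of $\sigma$ on opposite sides of the chord $s\cap\sigma$ can still lie in the same face of $\calA(S_1(\sigma))$, connected by a path that leaves $\sigma$ and goes around an endpoint of $s$. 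Concretely, let $s$ be one long segment crossing $\sigma$, let the short segments form together with $s$ a small pocket around $p$ strictly inside $\sigma$ (so $p\in P'(\sigma)$), and let $q\in\sigma$ lie on the other side of $s$: then $q\in F_p(S_1(\sigma))\cap F_p(S_2(\sigma))\cap\sigma$ but $q\notin F_p(L_1(\sigma))$. So both your displayed equality and the preliminary equality $F_p(S_1(\sigma))\cap\sigma=F_p(L_1(\sigma))\cap\sigma$ fail, and the failing direction (from segments to lines) is exactly the one your argument needs in order to conclude $F_p(S_{\sigma})\subseteq F_p(L_1(\sigma))\cap F_p(S_2(\sigma))$.

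The repair is to argue at the level of the single component $F_p(S_{\sigma})$ rather than of the whole intersection clipped to $\sigma$. Since $p\in P'(\sigma)$, $F_p(S_{\sigma})$ lies in the interior of $\sigma$ and avoids every long segment; because a long segment has no endpoint in the interior of $\sigma$, its extension to a full line adds nothing inside the interior of $\sigma$, so $F_p(S_{\sigma})$ also avoids every line of $L_1(\sigma)$ and, being connected and containing $p$, satisfies $F_p(S_{\sigma})\subseteq F_p(L_1(\sigma))$; hence $F_p(S_{\sigma})$ is contained in the component of $F_p(L_1(\sigma))\cap F_p(S_2(\sigma))$ through $p$. For the reverse containment, use the trivial inclusion $F_p(L_1(\sigma))\subseteq F_p(S_1(\sigma))$: the component through $p$ of $F_p(L_1(\sigma))\cap F_p(S_2(\sigma))$ is a connected subset of $F_p(S_1(\sigma))\cap F_p(S_2(\sigma))$ containing $p$, hence contained in $F_p(S_{\sigma})$ by the observation preceding the lemma. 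This is, in compressed form, the paper's own proof; your write-up has the right ingredients but routes them through an incorrect set identity.
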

\begin{proof}
Recall that $F_p(S_{\sigma})$ is the connected component of
$F_p(S_1(\sigma))\cap F_p(S_2(\sigma))$ that contains $p$. As $p\in
P'(\sigma)$, we know that $F_p(S_{\sigma})$ is in the interior of $\sigma$. For any
segment $s\in S_1(\sigma)$, suppose that we extend $s$ to a full line.
As $s$ is a long segment, the extension of $s$ does not intersect the
interior of $\sigma$ and thus does not intersect $F_p(S_{\sigma})$. This implies
the lemma.
\end{proof}

Due to the above lemma, to compute the faces of $\calA(S_{\sigma})$
containing the points of $P'(\sigma)$, we do the following:
(1) compute the faces of $\calA(L_1(\sigma))$ containing the points of $P'(\sigma)$;
(2) compute the faces of $\calA(S_2(\sigma))$ containing the points of $P'(\sigma)$;
(3) compute the faces $F_p(S_{\sigma})$ for all points $p\in
P'(\sigma)$ by intersecting the faces obtained in the first two steps
and computing the connected components containing the points of
$P'(\sigma)$. We discuss how to implement the three steps below.

\begin{enumerate}
  \item The first step can be done by applying our algorithm for
the line case in Theorem~\ref{theo:20}, because $L_1(\sigma)$ is a set of lines. As
$|L_1(\sigma)|=|S_1(\sigma)|=O(n/r)$ and $|P'(\sigma)|\leq
|P(\sigma)|\leq m/r^2$, the runtime of the algorithm is bounded by
$$O\left(\frac{n^{2/3}m^{2/3}}{r^2}\log^{2/3}
\frac{n}{\sqrt{m}}+\left(\frac{n}{r}+\frac{m}{r^2}\right)\log
\frac{n}{r}\right).$$

In addition, the total size of all computed faces is
\begin{equation}\label{equ:10}
  O\left(\frac{n^{2/3}m^{2/3}}{r^2}+\frac{n}{r}\right),
\end{equation}
 by applying the upper bound
$O(m^{2/3}n^{2/3}+n)$ on
the combinatorial complexity of $m$ distinct faces in an arrangement
of $n$ lines~\cite{ref:ClarksonCo90}. This bound will be needed later in the time analysis of the third step.

  \item
  For the second step, we apply our
algorithm recursively on $S_2(\sigma)$ and $P'(\sigma)$, so the
problem size becomes $(n/r^2,m/r^2)$ as $|S_2(\sigma)|\leq n/r^2$ and
$|P'(\sigma)|\leq m/r^2$.

Also, the total size of all computed faces
is bounded by
\begin{equation}\label{equ:20}
O\left(\frac{n^{2/3}m^{2/3}}{r^{8/3}}+\frac{n}{r^2}\alpha(\frac{n}{r^2})+\frac{n}{r^2}\log\frac{m}{r^2}\right),
\end{equation}
by applying the $(m^{2/3}n^{2/3}+n\alpha(n)+n\log m)$ upper bound on
the combinatorial complexity of $m$ distinct faces in an arrangement
of $n$ segments~\cite{ref:AronovTh92}. This bound will be needed later in the time analysis of the third step.

\item
The third step can be done by applying the blue-red merge algorithm
of~\cite{ref:EdelsbrunnerTh90}, which takes
$O((\beta+\rho+|P'(\sigma)|)\log(\beta+\rho+|P'(\sigma)|))$
time, where $\beta$ is the total size of all faces computed in the
first step, which is bounded by~\eqref{equ:10},
and $\rho$ is the total size of all faces computed in the
second step, which is bounded by~\eqref{equ:20}.
%Using the bounds for $\beta$ and $\rho$ discussed above,
As $m<n^2/2$, $\log m=O(\log n)$. Hence, the runtime of the third step is
$$O\left(\left(\frac{n^{2/3}m^{2/3}}{r^2} + \frac{n}{r} +
\frac{n}{r^2}\alpha(\frac{n}{r^2})+\frac{n}{r^2}\log\frac{m}{r^2}+\frac{m}{r^2} \right)\log
n\right).$$
\end{enumerate}

Since $\Xi$ has $O(r^2)$ cells, the total time of the first and third
steps for all cells of $\Xi$ is $O(n^{2/3}m^{2/3}\log n+nr\log
n+m\log n+n\alpha(n/r^2)\log n+n\log(m/r^2)\log n)$.

In summary, the runtime of the overall algorithm excluding the recursive calls is
$$O\left(nr^2\alpha^2(n)\log n+n^{2/3}m^{2/3}\log n+nr\log n+m\log
n+n\log(m/r^2)\log n\right).$$

Let $T(n,m)$ be the total time of the overall algorithm. If $m=1$,
we apply the algorithm for computing a single
face~\cite{ref:AmatoCo95}, and thus $T(n,m)=O(n\alpha^2(n)\log n)$. If
$m\geq n^2/2$, we use the straightforward approach and thus
$T(n,m)=O(m\log n)$. Since $r$ is a constant, we obtain the
following (with big-O notation omitted)
\[
T(n,m)=
\begin{cases}
n\alpha^2(n)\log n & m=1,\\
n^{2/3}m^{2/3}\log n+(n\alpha^2(n)+n\log m+m)\log n + r^2\cdot T(\frac{n}{r^2},\frac{m}{r^2}) & 2\leq m<n^2/2,\\
m\log n & m\geq n^2/2.
\end{cases}
\]
Note that after at most $O(\log m)$ recursions, we will reach
subproblems $T(n,m)$ with $m=1$, and after at most $O(\log
(n/\sqrt{m}))$ recursions, we will reach subproblems $T(n,m)$ with
$m\geq n^2/2$. Hence,
the depth of the recursion is $O(\min\{\log m,\log (n/\sqrt{m})\})$.
Therefore, the recurrence relation solves to $T(n,m)=O(n^{2/3}m^{2/3}\log n+\tau(n\alpha^2(n)+n\log m+m)\log n)$, where $\tau=\min\{\log m,\log (n/\sqrt{m})\}$.

The following theorem summarizes the result.

\begin{theorem}\label{theo:30}
Given a set $S$ of $n$ line segments and a set $P$ of $m$ points in the plane,
the faces of the arrangement of the segments containing at least one point of $P$ can be computed in %$O(n^{2/3}m^{2/3}\log^{2/3}(n/\sqrt{m})+(n+m)\log n)$ time.
$O(n^{2/3}m^{2/3}\log n+\tau(n\alpha^2(n)+n\log m+m)\log n)$ time, where $\tau=\min\{\log m,\log (n/\sqrt{m})\}$.
\end{theorem}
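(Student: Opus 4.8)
The plan is to combine a divide-and-conquer over a \emph{constant}-size cutting with the line-case machinery of Theorem~\ref{theo:20} and a recursive call on a strictly smaller segment instance. First I would fix a sufficiently large constant $r$, build a $(1/r)$-cutting $\Xi$ of the set $L$ of supporting lines of $S$, and then refine each triangle so that every cell meets $O(n/r)$ segments, contains $O(n/r^2)$ segment endpoints, and contains $O(m/r^2)$ points of $P$; presorting $E$ and $P$ by $x$-coordinate makes all of this cost $O(nr+m\log r)=O(n+m\log n)$ since $r=O(1)$. The point of the endpoint bound is that inside a cell $\sigma$ the segments split into long segments $S_1(\sigma)$ (whose supporting lines miss the interior of $\sigma$) with $|S_1(\sigma)|=O(n/r)$ and short segments $S_2(\sigma)$ with $|S_2(\sigma)|\le|E(\sigma)|=O(n/r^2)$ --- the short-segment instance is what we recurse on.

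Second, I would peel off the faces of $\calA(S)$ that straddle cell boundaries. For each of the $O(r^2)$ cell edges $e$ I compute the zone of $e$ in $\calA(S)$ by a single-face-type computation in $O(n\alpha^2(n)\log n)$ time~\cite{ref:AmatoCo95,ref:EdelsbrunnerAr92}, store all these zones in a point-location structure, and for each $p\in P$ test whether $F_p(S)$ meets a cell boundary; if so, $F_p(S)$ is already available as a zone face. For the remaining points $P'(\sigma)$ in each cell, $F_p(S)=F_p(S_\sigma)$ lies entirely inside $\sigma$, and by the stated lemma it is the connected component through $p$ of $F_p(L_1(\sigma))\cap F_p(S_2(\sigma))$, where $L_1(\sigma)$ is the set of supporting lines of $S_1(\sigma)$. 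So per cell I would (i) run Theorem~\ref{theo:20} on the $O(n/r)$ lines $L_1(\sigma)$ and the $O(m/r^2)$ points $P'(\sigma)$; (ii) recurse on the segment instance $(S_2(\sigma),P'(\sigma))$, of size $O(n/r^2)$ and $O(m/r^2)$; (iii) glue the two face collections by the blue--red merge of~\cite{ref:EdelsbrunnerTh90}, whose cost is near-linear in the combined face count. Here I would invoke the complexity bounds $O(m^{2/3}n^{2/3}+n)$ for faces in line arrangements~\cite{ref:ClarksonCo90} and $O(m^{2/3}n^{2/3}+n\alpha(n)+n\log m)$ for faces in segment arrangements~\cite{ref:AronovTh92} to bound the inputs to the merge, so step (iii) costs $O\big((n^{2/3}m^{2/3}/r^2+n/r+(n/r^2)\alpha(n/r^2)+(n/r^2)\log(m/r^2)+m/r^2)\log n\big)$ per cell.

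Third, I would assemble the recurrence. Summing the zone construction, the per-cell line-case calls, and the per-cell merges over the $O(r^2)$ cells gives, outside the recursion,
$$
O\!\left(nr^2\alpha^2(n)\log n + n^{2/3}m^{2/3}\log n + nr\log n + (m + n\log(m/r^2))\log n\right),
$$
so $T(n,m)=O(n^{2/3}m^{2/3}\log n + (n\alpha^2(n)+n\log m+m)\log n)+r^2\,T(n/r^2,m/r^2)$, with base cases $T(n,1)=O(n\alpha^2(n)\log n)$ (single face~\cite{ref:AmatoCo95}) and $T(n,m)=O(m\log n)$ for $m\ge n^2/2$. Unwinding, the $n^{2/3}m^{2/3}$ term stays bounded because $r^2\cdot(n/r^2)^{2/3}(m/r^2)^{2/3}=r^{-2/3}n^{2/3}m^{2/3}$ yields a geometric series, while each remaining linear-ish term is simply multiplied by the recursion depth. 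Since $m/r^2$ shrinks by a constant factor per level we hit $m=1$ after $O(\log m)$ levels, and since $(n/r^2)^2/(m/r^2)=r^2n^2/m$ grows we hit $m\ge n^2/2$ after $O(\log(n/\sqrt m))$ levels; hence the depth is $\tau=O(\min\{\log m,\log(n/\sqrt m)\})$ and the recurrence solves to $T(n,m)=O(n^{2/3}m^{2/3}\log n+\tau(n\alpha^2(n)+n\log m+m)\log n)$.

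The main obstacle I anticipate is precisely the accounting in step (iii) together with the recurrence: one must check that the face-complexity bounds~\cite{ref:ClarksonCo90,ref:AronovTh92} really do dominate the sizes fed into the blue--red merge at \emph{every} level of recursion, so the merge cost stays within the claimed per-level budget, and that the $\alpha(\cdot)$ and $\log(\cdot)$ factors behave monotonically under the substitution $n\mapsto n/r^2$, $m\mapsto m/r^2$ so that they can be pulled out of the sum over levels and multiplied by $\tau$ rather than handled more delicately. The reduction lemma (a long segment's supporting line cannot touch an interior face of $\sigma$) and the pairwise disjointness of zones across cells are the geometric facts that make the decomposition correct; everything else is careful bookkeeping.
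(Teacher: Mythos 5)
Your proposal is correct and matches the paper's proof essentially step for step: the same constant-size cutting of the supporting lines refined by endpoint and point counts, the same long/short segment split, the zone-of-cell-edges computation via~\cite{ref:AmatoCo95} with point location to peel off boundary-straddling faces, the per-cell combination of Theorem~\ref{theo:20} on $L_1(\sigma)$, recursion on $(S_2(\sigma),P'(\sigma))$, and the blue--red merge of~\cite{ref:EdelsbrunnerTh90} bounded via~\cite{ref:ClarksonCo90,ref:AronovTh92}, leading to the identical recurrence and depth-$\tau$ analysis. The bookkeeping concerns you flag are handled exactly as you anticipate, so no new ideas are needed beyond what you wrote.
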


\paragraph{Remark.}
The algorithm runs in $O(n\alpha^2(n)\log n)$ time for $m= O(1)$,
which matches the time for computing a single
face~\cite{ref:AmatoCo95}, and runs in $O(m\log n)$ time for $m=
\Omega(n^2)$, which matches the performance of the straightforward
approach.

\subsection{The randomized algorithm}

In this section, we present a randomized algorithm, whose running time is a function of $K$, the number of intersections of all segments of $S$. The algorithm is faster than Theorem~\ref{theo:30} when $K=o(n^2)$

We again assume that $m< n^2/2$ since otherwise the problem can be solved in $O(m\log n)$ by the straightforward approach. We resort to a result of de Berg and Schwarzkopf~\cite{ref:deBergCu95}. Given any $r\leq n$ and $K$, de Berg and Schwarzkopf~\cite{ref:deBergCu95} gave a randomized algorithm that can construct a $(1/r)$-cutting $\Xi$ for $S$ in $O(n\log r+Kr/n)$ expected time and the size of $\Xi$ is $O(r+Kr^2/n^2)$. For each cell $\sigma\in \Xi$ (which is a triangle\footnote{In the algorithm description~\cite{ref:deBergCu95}, each cell of the cutting is a constant-sized convex polygon, but we can further triangulate it without increasing the complexity asymptotically.}), $\sigma$ is intersected by $O(n/r)$ segments of $S$.

We set $r=n^2/(n+K)$, and thus $1<r\leq n$ and the size of $\Xi$ is bounded by $O(r)$. By building a point location data structure on $\Xi$~\cite{ref:EdelsbrunnerOp86,ref:KirkpatrickOp83}, we find, for each point of $P$, the cell of $\Xi$ containing it. This takes $O(r+m\log r)$ time in total. For each cell $\sigma\in \Xi$, define $P(\sigma)=P\cap \sigma$. If $|P(\sigma)|>m/r$, then in the same way as in Section~\ref{sec:first}, we further triangulate $\sigma$ into $2\cdot \lceil|P(\sigma)|\cdot r/m\rceil$ triangles each of which contains at most $m/r$ points of $P$; we now consider these triangles as cells of $\Xi$ but $\sigma$ is not a cell of $\Xi$ anymore. As before, if we presort $P$ in $O(n\log n)$ time, then the triangulation for all cells of $\Xi$ can be done in $O(n)$ time in total.

The high-level scheme of the algorithm is similar in spirit to that in Section~\ref{sec:second} for the line case.
For each cell $\sigma\in \Xi$, let $S_{\sigma}$ denote the subset of segments of $S$ intersecting $\sigma$. We define the {\em zone} of $\sigma$ as the collection of face portions of $F\cap \sigma$ for all faces $F\in \calA(S_{\sigma})$ that intersect the boundary of $\sigma$. As in the line case in Section~\ref{sec:second}, to compute all nonempty faces of $\calA(S)$, it suffices to compute, for every cell $\sigma\in \Xi$, the faces of $\calA(S_{\sigma})$ containing the points of $P(\sigma)$ and the zone of $\sigma$. The nonempty faces of $\calA(S)$ that are split among the
zones can be obtained by merging the zones along the edges of cells
of $\Xi$.

Computing the zone for $\sigma$ can be done in $O(n_{\sigma}\alpha(n_{\sigma
})\log n_{\sigma})$ randomized time~\cite{ref:ChazelleCo93}, where $n_{\sigma}=|S_{\sigma}|$.
%As $n_{\sigma}=O(n/r)$ and $\Xi$ has $O(r)$ cells, computing the zone for all cells and merging them takes $O(n\alpha(n/r)\log (n/r))$ time.
Then, for each point $p\in P(\sigma)$, we determine whether $p$ belongs to a face of the zone of $\sigma$. This can be done by first building a point location data structure on the zone~\cite{ref:EdelsbrunnerOp86,ref:KirkpatrickOp83} and then using point location queries. As the size of the zone is $O(n_{\sigma}\alpha(n_{\sigma}))$, this step takes $O(n_{\sigma}\alpha(n_{\sigma})+m_{\sigma}\log n_{\sigma})$ time, where $m_{\sigma}=|P(\sigma)|$. Finally, we merge all zones for all cells of $\Xi$ (which is a straightforward task as zones of different cells are disjoint).
As $n_{\sigma}=O(n/r)$, $m_{\sigma}\leq m/r$, and $\Xi$ has $O(r)$ cells, the total time spent on zones is $O(n\alpha(n/r)\log(n/r)+m\log(n/r))$.

For the subproblem of computing the faces of $\calA(S_{\sigma})$ containing the points of $P(\sigma)$, we apply Theorem~\ref{theo:30}. Since we are satisfied with a randomized procedure, in the algorithm of Theorem~\ref{theo:30}, we can replace the $O(n\alpha^2(n)\log n)$ time deterministic algorithm~\cite{ref:AmatoCo95} for computing the zone of a cell edge by a slightly faster $O(n\alpha(n)\log n)$ time randomized algorithm~\cite{ref:ChazelleCo93}, and thus the total expected time of the algorithm for Theorem~\ref{theo:30} is the same as before except that the factor $\alpha^2(n)$ becomes $\alpha(n)$. Applying this result, we can solve the subproblem for $\sigma$ in $O(n_{\sigma}^{2/3}m_{\sigma}^{2/3}\log n_{\sigma}+\tau_{\sigma}(n_{\sigma}\alpha(n_{\sigma})+n_{\sigma}\log m_{\sigma}+m_{\sigma})\log n_{\sigma})$ time, where $\tau_{\sigma}=\min\{\log m_{\sigma},\log(n_{\sigma}/\sqrt{m_{\sigma}}\}$.
As $n_{\sigma}=O(n/r)$, $m_{\sigma}\leq m/r$, and $\Xi$ has $O(r)$ cells, the overall time for solving the subproblem for all cells of $\sigma\in \Xi$ is
\begin{equation*}
O\left(\frac{n^{2/3}m^{2/3}}{r^{1/3}}\log\frac{n}{r}+\tau'\left(n\alpha(\frac{n}{r})+n\log\frac{m}{r}+m\right)\log\frac{n}{r}\right),
\end{equation*}
where $\tau'=\min\{\log (m/r),\log(n/\sqrt{mr})\}$.

Since $r>1$, we have $n/r<n$, $m/r<m$, and $\tau'<\tau$, where $\tau=\min\{\log m,\log (n/\sqrt{m})\}$.
%Also, since $m<n^2/2$, $\log m=O(\log n)$.
Plugging in $r=n^2/(n+K)$ leads to the total time of the entire algorithm bounded by $O(m^{2/3}K^{1/3}\log n+\tau(n\alpha(n)+n\log m+m)\log n)$.

The above algorithm works based on the assumption that $K$ is known. As we do not know $K$, we overcome the problem by the standard trick of ``guessing''. We start with $K'=K_0$ for a constant $K_0$, and run the algorithm with $K'$. If the algorithm takes too long, then our guess is too low and we double $K'$. Using this doubling strategy, the algorithm is expected to stop within a constant number of rounds after $K'$ is larger than $K$ for the first time. Hence, the total time is asymptotically the same as if we had plugged in the right value of $K$, except for the $O(\tau(n\alpha(n)+n\log m+m)\log n)$ overhead term, which increases by a factor of $O(\log K)$.
%Hence, the total time of the algorithm is as stated in the theorem.

We conclude this section with the following theorem.

\begin{theorem}\label{theo:40}
Given a set $S$ of $n$ line segments and a set $P$ of $m$ points in the plane,
the faces of the arrangement of the segments containing at least one point of $P$ can be computed in $O(m^{2/3}K^{1/3}\log n+\tau(n\alpha(n)+n\log m+m)\log n\log K)$ expected time, where $\tau=\min\{\log m,\log (n/\sqrt{m})\}$ and $K$ is the number of intersections of all segments of $S$.
\end{theorem}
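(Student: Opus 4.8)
The plan is to reuse the divide-and-conquer scheme of Section~\ref{sec:second}, but with three changes: replace the deterministic line cutting by the randomized, output-sensitive segment cutting of de Berg and Schwarzkopf~\cite{ref:deBergCu95}, whose size and construction cost scale with $K$; solve each induced subproblem by the randomized variant of Theorem~\ref{theo:30}; and remove the dependence on knowing $K$ by a doubling search.

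First I would dispose of the case $m\ge n^2/2$ by the straightforward point-location algorithm in $O(m\log n)$ time, so from now on $m<n^2/2$ and $\log m=O(\log n)$. Given a guess $K'$ of $K$, set $r=n^2/(n+K')\in(1,n]$ and build, using~\cite{ref:deBergCu95}, a $(1/r)$-cutting $\Xi$ in $O(n\log r+K'r/n)$ expected time of size $O(r+K'r^2/n^2)=O(r)$, with each cell crossed by $O(n/r)$ segments of $S$; note that $K'r/n=K'n/(n+K')\le n$, so the construction takes $O(n\log n)$ expected time. After presorting $P$ I refine $\Xi$ so every cell holds at most $m/r$ points of $P$ while keeping $|\Xi|=O(r)$, and locate all points of $P$ in $\Xi$ in $O(r+m\log r)$ time. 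Exactly as in Section~\ref{sec:second}, it suffices to compute, for each cell $\sigma$, the faces of $\calA(S_\sigma)$ meeting $P(\sigma)$ together with the zone of $\sigma$ in $\calA(S_\sigma)$, and to stitch the zones along cell edges to recover the faces of $\calA(S)$ split across cells. For a cell $\sigma$ I compute its zone in $O(n_\sigma\alpha(n_\sigma)\log n_\sigma)$ randomized time~\cite{ref:ChazelleCo93} (the zone has size $O(n_\sigma\alpha(n_\sigma))$), build a point-location structure on it, and for each $p\in P(\sigma)$ decide whether $F_p(S_\sigma)$ meets $\partial\sigma$: if so, $F_p(S)$ is obtained by the later zone-merging step, and if not, $F_p(S)=F_p(S_\sigma)$ and $p$ is forwarded to the subproblem on $(S_\sigma,P(\sigma))$. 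That subproblem I solve by Theorem~\ref{theo:30}, with its deterministic zone-of-edge subroutine replaced by the $O(n\alpha(n)\log n)$ randomized one~\cite{ref:ChazelleCo93}, so its bound holds with $\alpha^2$ improved to $\alpha$.

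For the accounting I would substitute $n_\sigma=O(n/r)$, $m_\sigma\le m/r$, and $|\Xi|=O(r)$, and use $r>1$ to get $n/r<n$, $m/r<m$, and $\tau'=\min\{\log(m/r),\log(n/\sqrt{mr})\}\le\tau$. Summing the zone costs gives $O(n\alpha(n/r)\log(n/r)+m\log(n/r))$, and summing the subproblem costs gives $O\left(\frac{n^{2/3}m^{2/3}}{r^{1/3}}\log\frac{n}{r}+\tau(n\alpha(n)+n\log m+m)\log n\right)$. Substituting $r=n^2/(n+K')$ turns $\frac{n^{2/3}m^{2/3}}{r^{1/3}}$ into $m^{2/3}(n+K')^{1/3}=O(m^{2/3}(K')^{1/3}+m^{2/3}n^{1/3})$, where $m^{2/3}n^{1/3}=O(m+n)$ is absorbed by the overhead, so once $K'\ge K$ the whole run takes $O(m^{2/3}K^{1/3}\log n+\tau(n\alpha(n)+n\log m+m)\log n)$ expected time.

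Finally, to drop the assumption that $K$ is known, start with a constant guess $K'=K_0$ and double $K'$ whenever the current run exceeds, or produces a cutting larger than, the bound predicted for $K'$; the first round with $K'\ge K$ succeeds, and since the leading term $m^{2/3}(K')^{1/3}\log n$ grows geometrically in $K'$ while the overhead term is essentially unchanged across rounds, a geometric-series argument over the $O(\log K)$ rounds yields the total expected time $O(m^{2/3}K^{1/3}\log n+\tau(n\alpha(n)+n\log m+m)\log n\log K)$. I expect the one genuinely delicate point to be this guessing mechanism — showing that an overlong or oversized run of~\cite{ref:deBergCu95} (whose actual performance depends on the true number of intersections of $S$) can be detected and aborted cleanly, and that restarting from scratch at each doubling does not affect the asymptotics — which is precisely the standard trick invoked for~\cite{ref:AgarwalCo98} in the introduction.
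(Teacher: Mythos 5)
Your proposal is correct and follows essentially the same route as the paper: the de Berg--Schwarzkopf randomized $(1/r)$-cutting with $r=n^2/(n+K')$, a per-cell zone computed by the $O(n_\sigma\alpha(n_\sigma)\log n_\sigma)$ randomized single-face algorithm plus point location, the subproblems solved by Theorem~\ref{theo:30} with its zone subroutine swapped for the randomized $O(n\alpha(n)\log n)$ one, and a doubling guess on $K$ that puts the extra $\log K$ factor only on the overhead term. The accounting (size $O(r)$, $n_\sigma=O(n/r)$, $m_\sigma\le m/r$, $n^{2/3}m^{2/3}/r^{1/3}=m^{2/3}(n+K')^{1/3}$ with $m^{2/3}n^{1/3}$ absorbed) matches the paper's analysis.
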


\section{The face query problem}
\label{sec:query}

In this section, we consider the face query problem. Let $S$ be a set
of lines in the plane. The problem is to build a data structure on $S$
so that given a query point $p$, the face $F_p(S)$ of the arrangement $\calA(S)$ that contains $p$ can be computed
efficiently. Since $F_p(S)$ is convex, our query algorithm will return
the root of a binary search tree storing $F_p(S)$ so that
binary-search-based queries on $F_p(S)$ can be performed in $O(\log
n)$ time each (e.g., given a query point $q$, decide whether $q\in
F_p(S)$; given a line $\ell$, compute its intersection with $F_p(S)$).
$F_p(S)$ can be output explicitly in $O(|F_p(S)|)$
additional time using the tree.

We work in the dual plane as in Section~\ref{sec:first} and also follow
the notation there. Let $S^*$ denote the set of dual points of $S$.
For a query point $p$ in the primal plane, let $p^*$ denote its dual
line. Define $S^*_+(p^*)$, $S^*_-(p^*)$, $H_+(p^*)$, $H_-(p^*)$, and
$F^*_p(S)$ in the same way as in Section~\ref{sec:first}.

Inspired by the algorithm of Lemma~\ref{lem:40}, we resort to the
randomized optimal partition tree of Chan~\cite{ref:ChanOp12}, which is
originally for simplex range counting queries in
$d$-dimensional space for any constant $d\geq 2$. We briefly review
the partition tree in the planar case. Let $P$ be a set of $n$ points in the
plane. Chan's partition tree recursively subdivides the plane into
triangles (also referred to as {\em cells}). Each node $v$ of $T$
corresponds to a triangle $\triangle_v$ and a subset $P_v$ of $P$ such
that $P_v=P\cap \triangle_v$. If $v$ is an internal node, then $v$ has
$O(1)$ children whose triangles form a disjoint partition of
$\triangle_v$. Hence, each point of $P$ appears in $P_v$ for only one node $v$ in each level of $T$.
$\triangle_v$ and the cardinality $|P_v|$ are stored at $v$.
But $P_{v}$ is not explicitly stored at $v$ unless $v$ is a leaf, in which case $|P_{v}|=O(1)$.\footnote{To simplify the discussion for solving our problem, if $v$ is a leaf and $|P_v|>1$, then we further triangulate $\triangle_v$ into $O(1)$ triangles each of which contains at most one point of $P$. This adds one more level to $T$ but has the property that each leaf triangle contains at most one point of $P$. This change does not affect the performance of the tree asymptotically.}
The space of $T$ is $O(n)$ and its height is $O(\log n)$. $T$ can be built by a randomized algorithm of $O(n\log n)$ expected time. Given a query half-plane $h$, the range query algorithm~\cite{ref:ChanOp12} finds a set $V_h$ of nodes of $T$ such that $P\cap h$ is exactly the union of $P_v$ for all nodes $v\in V_h$, the triangles $\triangle_v$ for all $v\in V_h$ are pairwise disjoint (and thus the subsets $P_v$ for all $v\in V_h$ are also pairwise disjoint), and $|V_h|=O(\sqrt{n})$ holds with high probability. The query algorithm runs in $O(\sqrt{n})$ time with high probability.

\paragraph{Preprocessing.}
To solve our problem, in the preprocessing, we build Chan's partition tree $T$ on the points of $S^*$, which takes $O(n)$ space and $O(n\log n)$ expected time. Let $S^*_v=S^*\cap \triangle_v$ for each node $v\in T$.
We further enhance $T$ as follows. For each node $v\in T$, we compute
the convex hull $H_v$ of $S^*_v$ and store $H_v$ at $v$ by a binary
search tree. To this end, we can presort all points of $S^*$ by
$x$-coordinate. Then, we sort $S^*_v$ for all nodes $v\in T$, which can be done in $O(n\log n)$
time in total due to the presorting of $S^*$.
Consequently, computing the convex hull $H_v$ can be done in
$O(|S^*_v|)$ time. As such, computing convex hulls
for all nodes of $T$ takes $O(n\log n)$ time in total. With these
convex hulls, the space of $T$ increases to $O(n\log n)$, because the height of $T$ is $O(\log n)$ and the subsets $S^*_v$ for all nodes $v$ in the same level of $T$ form a partition of $S^*$. This finishes our
preprocessing, which takes $O(n\log n)$ space and $O(n\log n)$
expected time.

\paragraph{Queries.}
Consider a query point $p$. Without loss of generality, we assume that
$p^*$ is horizontal. Using the partition tree $T$, we compute the lower hull $H_+(p^*)$
as follows.

Let $h$ be the upper half-plane bounded by the line $p^*$.
We apply the range query algorithm~\cite{ref:ChanOp12} on $h$ and find a
set $V_h$ of nodes of $T$, as discussed above. According to the
properties of $V_h$, $S^*_+(p^*)$ is the union of $S^*_v$ for all
nodes $v\in V_h$ and the triangles $\triangle_v$ for all $v\in V_h$
are pairwise disjoint. Therefore, $H_+(p^*)$ is the lower hull of the
convex hulls $H_v$ of all $v\in V_h$. As the triangles of
$\triangle_v$ for all $v\in V_h$ are pairwise disjoint, the convex
hulls $H_v$ of all $v\in V_h$ are also pairwise disjoint.
Consequently, we can apply the algorithm of Lemma~\ref{lem:40} to
compute $H_+(p^*)$ from convex hulls $H_v$ of all $v\in V_h$, which
takes $O(|V_h|\log n)$ time because convex hulls $H_v$ are already
available due to the preprocessing. As $|V_h|=O(\sqrt{n})$ holds with high
probability, the time for computing $H_+(p^*)$ is bounded by
$O(\sqrt{n}\log n)$ with high probability.

Analogously, we can compute the upper hull $H_-(p^*)$. Afterwards, $F_p(S)$ can be
obtained as a binary search tree in $O(\log n)$ time by
computing the inner common tangents of $H_+(p^*)$ and $H_-(p^*)$, as explained in Section~\ref{sec:first}. The total query time is bounded by $O(\sqrt{n}\log n)$ with high probability. Further, $F_p(S)$ can be output explicitly in additional $|F_p(S)|$ time.

As discussed in Section~\ref{sec:first}, once the binary search tree
for $F_p(S)$ is constructed, binary search trees representing
	convex hulls of some nodes of $T$ may be destroyed unless fully
	persistent trees are used. To handle future queries, we need to
	restore those convex hulls.
Different from the algorithm in Section~\ref{sec:first}, depending on
applications, persistent trees may not be necessary here. For example,
if $F_p(S)$ needs to be output explicitly, then after $F_p(S)$ is
	output, we can restore those destroyed convex hulls by ``reversing'' the operations that are performed
	during the algorithm of Lemma~\ref{lem:40}. The time is still
	bounded by $O(\sqrt{n}\log n)$ with high probability. Hence in
	this case persistent trees are not necessary. Also, if $F_p(S)$
	only needs to be implicitly represented but $F_p(S)$ will not be needed
	anymore before the next query is performed, then we can also
	restore the convex hulls as above without
	using persistent trees. However, if $F_p(S)$ only needs to be
	implicitly represented and $F_p(S)$ still needs to be kept even
	after the next query is performed, then we have to use persistent
	trees.

\medskip

We summarize our result in the following theorem.

\begin{theorem}\label{theo:query10}
Given a set $S$ of $n$ lines in the plane, we can preprocess it in
$O(n\log n)$ randomized time and $O(n\log n)$ space so that for any
query point $p$, we can produce a binary search tree representing the face of $\calA(S)$
that contains $p$ and the query time is bounded by $O(\sqrt{n}\log n)$ with high
probability. Using the binary search tree,
standard binary-search-based queries on the face can be performed in
$O(\log n)$ time each,
and outputting the face explicitly can be done in additional time linear in the
number of edges of the face.
\end{theorem}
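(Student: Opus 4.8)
The plan is to work in the dual plane and combine Chan's randomized optimal partition tree with the hull-merging subroutine of Lemma~\ref{lem:40}. Recall that for a query point $p$ whose dual line $p^*$ we may take to be horizontal, the face $F_p(S)$ is completely determined by the two half-hulls $H_+(p^*)$ and $H_-(p^*)$ of the subsets $S^*_+(p^*)$ and $S^*_-(p^*)$ of $S^*$ lying above and below $p^*$, together with their two inner common tangents. Hence it suffices to build a structure that, given the query line $p^*$, returns binary search trees for $H_+(p^*)$ and $H_-(p^*)$ quickly; the two inner common tangents, and thus a binary search tree for $F_p(S)$, then follow in $O(\log n)$ time by~\cite{ref:GuibasCo91}, and $F_p(S)$ can be reported explicitly in $O(|F_p(S)|)$ further time.

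For the preprocessing I would build Chan's partition tree $T$ on the $n$ dual points of $S^*$ in $O(n\log n)$ expected time and $O(n)$ space, and then augment it. First presort $S^*$ by $x$-coordinate; since the subsets $S^*_v$ over all nodes $v$ in any fixed level of $T$ partition $S^*$ and $T$ has height $O(\log n)$, one can extract the sorted lists of all $S^*_v$ in $O(n\log n)$ total time, and then compute the convex hull $H_v$ of each $S^*_v$ in $O(|S^*_v|)$ time, stored in a balanced binary search tree at $v$. The same level-partition argument shows the total hull size is $O(n\log n)$, so the augmented $T$ occupies $O(n\log n)$ space and is built in $O(n\log n)$ expected time.

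For a query, let $h$ be the upper half-plane bounded by $p^*$ and run Chan's half-plane reporting query, which returns a canonical node set $V_h$ with $S^*_+(p^*)=\bigcup_{v\in V_h}S^*_v$, with the cells $\triangle_v$ for $v\in V_h$ pairwise disjoint, and with $|V_h|=O(\sqrt n)$ and query time $O(\sqrt n)$, both with high probability. Disjointness of the cells forces the stored hulls $H_v$, $v\in V_h$, to be pairwise disjoint, so the hypothesis of Lemma~\ref{lem:40} holds and that lemma merges them into a binary search tree representing $H_+(p^*)$ in $O(|V_h|\log n)=O(\sqrt n\log n)$ time, since the hulls themselves are already available from preprocessing. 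Computing $H_-(p^*)$ symmetrically and then the inner common tangents yields the binary search tree for $F_p(S)$ within the claimed $O(\sqrt n\log n)$ bound, with high probability inherited directly from Chan's bound on $|V_h|$.

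The step that needs the most care is the interaction between the split/merge operations of Lemma~\ref{lem:40} and the reuse of the stored hulls across successive queries: those operations destroy the hull trees at the nodes of $V_h$. I would resolve this exactly as in Section~\ref{sec:first}, either by using fully persistent path-copying binary search trees~\cite{ref:DriscollMa89,ref:SarnakPl86} so that each operation still costs $O(\log n)$ and leaves the original trees intact, or, when the application permits, by restoring the affected hulls after the query through reversing the operations performed during Lemma~\ref{lem:40}, at an additional $O(\sqrt n\log n)$ cost with high probability. Collecting the pieces then gives the stated $O(n\log n)$ preprocessing time and space and $O(\sqrt n\log n)$ query time, with the binary-search and explicit-output guarantees following from the fact that $F_p(S)$ is returned as a height-$O(\log n)$ search tree over the convex polygon $F_p(S)$.
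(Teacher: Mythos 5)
Your proposal matches the paper's own proof essentially step for step: building Chan's partition tree on the dual points, augmenting each node with a binary-search-tree convex hull via presorting, answering a query by half-plane reporting and merging the pairwise-disjoint node hulls with Lemma~\ref{lem:40}, then taking inner common tangents, and handling reuse of the hull trees by persistence or by reversing the operations. The argument is correct and I see no gaps.
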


As discussed in Section~\ref{sec:intro}, using our result in Theorem~\ref{theo:query10}, the algorithm in~\cite{ref:EdelsbrunnerIm89} for the face query problem in the segment case can also be improved accordingly.

%\paragraph{The face query problem for the segment case.}
%Edelsbrunner, Guibas, Hershberger, Seidel, Sharir, Snoeyink, and Welzl~\cite{ref:EdelsbrunnerIm89} also gave a data structure for the face query problem for an arrangement of $n$ line segments with the following performance: the preprocessing takes $\widetilde{O}(n^{5/3})$ randomized time, the space is $\widetilde{O}(n^{4/3})$, and the query time is bounded by $\widetilde{O}(n^{1/3})$ with high probability, where the notation $\widetilde{O}$ hides a polylogarithmic factor. Their preprocessing algorithm uses their algorithm for the line case as a subroutine. If we follow their algorithmic scheme but instead uses Theorem~\ref{theo:query10} as the subroutine for the line case, then the preprocessing time can be reduced to $\widetilde{O}(n^{4/3})$ randomized time, while the space is still $\widetilde{O}(n^{4/3})$ and the query time is still bounded by $\widetilde{O}(n^{1/3})$ with high probability.

\subsection{Tradeoff between storage and query time}

We further obtain a tradeoff between the preprocessing and the query
time. To this end, we make use of Chan's $r$-partial partition
tree~\cite{ref:ChanOp12}. Let $P$ be a set of $n$ point in the plane.
For any value $r<n/\log^{\omega(1)} n$,
an {\em $r$-partial partition tree} $T(r)$ for $P$ is the same as a
partition tree discussed before, except that a leaf now may contain up
to $r$ points. The number of nodes of $T$ is $O(n/r)$.
$T(r)$ can be built in $O(n\log n)$ randomized time.
Given a query half-plane $h$, the range query
algorithm~\cite{ref:ChanOp12} finds two sets $V^1_h$ and $V_h^2$ of
nodes of $T(r)$ with the following property: (1) for each node $v\in
V^1_h$, the triangle $\triangle_v$ is inside $h$; (2) for each node
$v\in V^2_h$, $v$ is a leaf and $\triangle_v$ is crossed by the bounding line of $h$;
(3) $P\cap h$ is the union of $P_v$ for all nodes $v\in V_h^1$ as well as
the intersection $P_v\cap h$ for all nodes $v\in V_h^2$; (4) the triangles $\triangle_v$ for all $v\in V_1(h)\cup V_2(h)$ are pairwise disjoint; (5)
$|V_1(h)|+|V_2(h)|=O(\sqrt{n/r})$ holds with high
probability. The query algorithm finds $V_1(h)$ and
$V_2(h)$ in $O(\sqrt{n/r})$ time with high probability.

\paragraph{Preprocessing.}
To solve our problem, in the preprocessing we build an $r$-partial
partition tree $T(r)$ on the points of $S^*$. For each node $v\in
T(r)$, we still compute and store the convex hull $H_v$ of $P_v$. This
still takes $O(n\log n)$ space and $O(n\log n)$ expected time as
before. Next, we perform additional preprocessing for each leaf $v$ of $T(r)$. Note
that $|S^*_v|\leq r$. Let $S_v$ denote the subset of the lines of $S$
in the primal plane dual to the points of $S^*_v$. We compute
explicitly the arrangement $\calA(S_v)$. For each face $F\in
\calA(S_v)$, its leftmost and rightmost vertices divide the boundary
of $F$ into an upper portion and a lower portion; for each portion, we use a binary
search tree to store it. We also build a point location data structure
on $\calA(S_v)$~\cite{ref:EdelsbrunnerOp86,ref:KirkpatrickOp83}.
This finishes the preprocessing for $v$, which takes $O(r^2)$ time and space. As
$T(r)$ has $O(n/r)$ leaves, the preprocessing for all leaves takes
$O(nr)$ time and space. Overall, the preprocessing
takes $O(n\log n+nr)$ expected time and $O(n\log
n+nr)$ space.

\paragraph{Queries.}
Consider a query point $p$. Again, we assume that
its dual line $p^*$ is horizontal. We compute the lower hull $H_+(p^*)$
as follows.

Let $h$ be the upper half-plane bounded by the line $p^*$.
We apply the range query algorithm~\cite{ref:ChanOp12} on $h$
and find two sets $V^1_h$ and $V^2_h$ of nodes of $T(r)$, as discussed
above. Due to the property (3) of $V^1_h$ and $V^2_h$ discussed
above,
$H_+(p^*)$ is the lower hull of the convex hulls $H_v$ of all $v\in V^1_h$ and the convex hulls $H'_v$ of the subset of points of $S^*_v$ above the line $p^*$ for all $v\in V^2_h$.
For each $v\in V^1_h$, the convex hull $H_v$ is available due to the preprocessing.
For each $v\in V^2_h$, $H'_v$ can be obtained in $O(\log n)$ time as
follows. Using the point location data structure on $\calA(S_v)$,
we find the face $F_p(S_v)$ of $\calA(S_v)$ containing $p$, and then
$H'_v$ is dual to the lower portion of the boundary of $F_p(S_v)$\footnote{In
fact, the lower portion of the boundary of $F_p(S_v)$ may be only
part of the dual of $H'_v$. However, since $F_p(S)\subseteq
F_p(S_v)$, using the lower portion of $F_p(S_v)$ as the dual of $H'_v$ to
compute $H_+(p^*)$ and then compute $F_p(S)$ will give the correct
answer.}, whose binary search tree is computed in the preprocessing.
%can be obtained using the
%binary search tree representing $F_p(S_v)$ computed during the
%preprocessing. More specifically, the leftmost and rightmost vertices
%of $F_p(S_v)$ partitions the boundary of $F_p(S_v)$ into an upper
%portion and a lower portion, and the lower portion is the dual of
%$H'_v$
%In face, during our preprocessing, we could store separately
%the upper and lower portions for each face of $\calA(S_v)$.
Due to the property (4) of $V^1_h$ and $V^2_h$, all convex hulls
$H_v$, $v\in V^1_h$, and $H_v'$, $v\in V^2_h$, are pairwise disjoint.
Thus, we can again apply the algorithm of Lemma~\ref{lem:40} to compute
$H_+(p^*)$ from these convex hulls in $O((|V^1_h|+|V^2_h|)\log n)$
time. As $|V^1_h|+|V^2_h|=O(\sqrt{n/r})$ holds with high probability,
the time for computing $H_+(p^*)$ is bounded by $O(\sqrt{n/r}\log n)$
with high probability.

%Analogously, we can compute $H_-(p^*)$. After that, we compute the inner common tangents of $H_+(p^*)$ and $H_-(p^*)$ in $O(\log n)$ time, and thus $F_p(S)$ can be obtained as a binary search tree, as explained in Section~\ref{sec:first}. The total query time is bounded by $O(\sqrt{n/r}\log n)$ with high probability. Further, $F_p(S)$ can be output explicitly in additional $|F_p(S)|$ time.

Analogously, we can compute upper hull $H_-(p^*)$. Afterwards, $F_p(S)$ can be
obtained as a binary search tree in $O(\log n)$ time by
computing the inner common tangents of $H_+(p^*)$ and $H_-(p^*)$, as
explained in Section~\ref{sec:first}. The total query time is bounded
by $O(\sqrt{n/r}\log n)$ with high probability. Further, $F_p(S)$ can be
output explicitly in additional $|F_p(S)|$ time.

As before, depending on applications, one can decide whether
persistent trees are needed for representing convex hulls of the nodes
of $T(r)$ as well as the boundary portions of the faces
of the arrangements $\calA(S_v)$ of the
leaves $v$ of $T(r)$.

\medskip

We summarize our result in the following
theorem.

\begin{theorem}\label{theo:query20}
Given a set $S$ of $n$ lines in the plane, for any value
$r<n/\log^{\omega(1)} n$, we can preprocess it in $O(n\log n+nr)$
randomized time and $O(n\log n+nr)$ space so that for any query point
$p$, we can produce a binary search tree representing the face of $\calA(S)$
that contains $p$ and the query time is bounded by $O(\sqrt{n/r}\log n)$ with high
probability. Using the binary search tree,
standard binary-search-based queries can be performed on the face in
$O(\log n)$ time each, and outputting the face explicitly can be done in time linear in the
number of edges of the face.
\end{theorem}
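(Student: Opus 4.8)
The plan is to verify that the preprocessing and query procedures described above meet the stated bounds, so the proof splits into (i) accounting for the preprocessing time and space, (ii) arguing the query algorithm is correct, and (iii) bounding the query time. Step (ii) is the only part that requires a genuine argument; the rest is bookkeeping built on Lemma~\ref{lem:40} and the properties of Chan's $r$-partial partition tree~\cite{ref:ChanOp12}.

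For (i), I would first note that $T(r)$ on $S^*$ is built in $O(n\log n)$ expected time, has $O(n/r)$ nodes, and has height $O(\log n)$; since the subsets $S^*_v$ over the nodes of any fixed level partition $S^*$, storing a binary search tree for the convex hull $H_v$ at every node costs $O(n\log n)$ space, and, after presorting $S^*$ by $x$-coordinate, all these hulls are computed in $O(n\log n)$ total time. For each leaf $v$ we have $|S_v|\le r$, so the arrangement $\calA(S_v)$, a point-location structure on it~\cite{ref:EdelsbrunnerOp86,ref:KirkpatrickOp83}, and binary search trees for the upper and lower boundary chains of each face of $\calA(S_v)$ together occupy $O(r^2)$ space and are built in $O(r^2)$ time; over the $O(n/r)$ leaves this is $O(nr)$. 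Summing gives the claimed $O(n\log n+nr)$ time and space.

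For (ii) and (iii), I would take a query point $p$ with $p^*$ horizontal and let $h$ be the upper half-plane bounded by $p^*$. The range query of~\cite{ref:ChanOp12} returns $V^1_h$ and $V^2_h$ satisfying properties (1)--(5) above; by (3), $S^*_+(p^*)$ is the union of the sets $S^*_v$ for $v\in V^1_h$ and $S^*_v\cap h$ for $v\in V^2_h$, and by (4) the convex hulls of these sets are pairwise disjoint, so $H_+(p^*)$ is exactly the lower hull of the hulls $H_v$ ($v\in V^1_h$) together with the hulls $H'_v=\mathrm{conv}(S^*_v\cap h)$ ($v\in V^2_h$). Each $H_v$ is precomputed; for a leaf $v\in V^2_h$ a search tree for $H'_v$ is obtained in $O(\log n)$ time by locating $F_p(S_v)$ in $\calA(S_v)$ and reading off the stored search tree for its lower boundary chain. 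Since $r<n/\log^{\omega(1)}n$, there are $O(\sqrt{n/r})$ such hulls with high probability, and they are pairwise disjoint, so Lemma~\ref{lem:40} produces a search tree for $H_+(p^*)$ in $O(\sqrt{n/r}\log n)$ time with high probability; the symmetric computation yields $H_-(p^*)$, and the two inner common tangents of $H_+(p^*)$ and $H_-(p^*)$ then give a search tree for $F_p(S)$ in $O(\log n)$ additional time, exactly as in Section~\ref{sec:first}. The $O(\log n)$ bound for binary-search-based queries on $F_p(S)$ and the $O(|F_p(S)|)$ bound for explicit output follow as there.

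The step I expect to need the most care is justifying that, for $v\in V^2_h$, using the lower boundary chain of $F_p(S_v)$ as the representation of $H'_v$ does not corrupt the final answer, since that chain can be a proper sub-chain of the full hull $H'_v$. The argument I would give is that each dropped hull vertex dualizes to a line lying below $p$ that does not bound $F_p(S_v)$, hence -- because $S_v\subseteq S$ forces $F_p(S)\subseteq F_p(S_v)$ -- does not bound $F_p(S)$ either, so it cannot lie on the inner-common-tangent portion of $H_+(p^*)$ that represents $F^*_p(S)$; consequently running Lemma~\ref{lem:40} on the sub-chains and then extracting $F^*_p(S)$ via inner common tangents returns the correct face. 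I would state this containment precisely and check that it is unaffected by taking the union over all $v\in V^2_h$. Finally, as in Section~\ref{sec:first}, I would remark that whether fully persistent search trees are needed to restore the hulls $H_v$ and the face-boundary chains destroyed by the split/merge operations is application-dependent and does not change the asymptotic bounds.
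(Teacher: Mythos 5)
Your proposal is correct and follows essentially the same route as the paper: build Chan's $r$-partial partition tree on $S^*$ with convex hulls at nodes and explicit arrangements plus point location at leaves, answer a query via $V^1_h$, $V^2_h$, Lemma~\ref{lem:40}, and inner common tangents. The subtle point you flag -- that the lower boundary chain of $F_p(S_v)$ may be only part of $H'_v$ but still yields the correct face because $F_p(S)\subseteq F_p(S_v)$ -- is exactly the justification the paper gives (in a footnote), and your elaboration of it is sound.
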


\paragraph{Remark.}
%As an example, if $r=n/\log^2 n$, then the preprocessing time and space are bounded by $O(n^2/\log^2 n)$ and the query time is $O(\log^2 n)$.
Using the random sampling
techniques~\cite{ref:ClarksonNe87,ref:HausslerEp87}, a tradeoff
between the preprocessing and the query time was also provided
in~\cite{ref:EdelsbrunnerIm89} roughly with the following performance:
the preprocessing takes $O(n^{3/2}r^{1/2}\log^{3/2}r\log^2 n)$ randomized time, the space
is $O(nr\log r\log n)$, and the query time is bounded by
$O(\sqrt{n/r}\log^2 n)$ with high probability (combining with the
compact interval trees~\cite{ref:GuibasCo91}). Hence, our result improves on all three aspects, albeit on a smaller range of $r$.

%---------------------------- Bibliography -------------------------------

% Please add the contents of the .bbl file that you generate,  or add bibitem entries manually if you like.
% The entries should be in alphabetical order

\footnotesize
 \bibliographystyle{plain}
\bibliography{reference}

% \newpage
% \section*{Appendix}
\end{document}